\newtheorem{lemma}{Lemma}
\newtheorem{assumption}{Assumption}
\newtheorem{definition}{Deﬁnition}
\newtheorem{theorem}{Theorem}
\definecolor{darkgreen}{RGB}{0,100,0}
\newcommand{\change}{\textcolor{black}}
\newcommand{\ptitle}[1]{\textit{\textbf{#1}}}
\def\BibTeX{{\rm B\kern-.05em{\sc i\kern-.025em b}\kern-.08em
    T\kern-.1667em\lower.7ex\hbox{E}\kern-.125emX}}
\begin{document}


\title{Importance Ranking in Complex Networks via Influence-aware Causal Node Embedding}

\author{Jiahui Gao,
        Kuang Zhou*,
        Yuchen Zhu and Keyu Wu
\IEEEcompsocitemizethanks{\IEEEcompsocthanksitem *Corresponding author. 
\IEEEcompsocthanksitem Jiahui Gao, Kuang Zhou and Yuchen Zhu are with the School of Mathematics and Statistics, Northwestern Polytechnical University, Xi’an, China and MOE Key Laboratory for Complexity Science in Aerospace, Northwestern Polytechnical University, Xi’an, China (email: gaojiahui\_0425@mail.nwpu.edu.cn; kzhoumath@nwpu.edu.cn; h1nkik@mail.nwpu.edu.cn).
\IEEEcompsocthanksitem Keyu Wu is with the College of Systems Engineering, National University of Defense Technology, Changsha, China (keyuwu@nudt.edu.cn).
\IEEEcompsocthanksitem This work was supported by the National Social Science Fund of China (No.~23BTJ053), the Natural Science Basic Research Plan in Shaanxi Province of China (No.~2025JC-YBMS-678), the National Natural Science Foundation of China (No.~92371101, 62273352), and the Practice and Innovation Funds for Graduate Students of Northwestern Polytechnical University (No.~PF2025074).}
}

\markboth{Journal of \LaTeX\ Class Files,~Vol.~14, No.~8, August~2015}%
{Shell \MakeLowercase{\textit{et al.}}: Bare Demo of IEEEtran.cls for Computer Society Journals}

\IEEEtitleabstractindextext{%
\begin{abstract}
\justifying
Understanding and quantifying node importance is a fundamental problem in network science and engineering, underpinning a wide range of applications 
such as influence maximization, social recommendation, and network dismantling. Prior research often relies on centrality measures or advanced graph embedding techniques 
using structural information, followed by downstream classification or regression tasks to 
identify critical nodes. 
However, these \change{approaches} typically decouple node representation learning from the ranking objective and \change{depend heavily on} the 
topological structure of target networks, leading to feature–task inconsistency and \change{poor cross-network generalization}.
This paper proposes a novel framework that leverages causal representation learning to 
\change{obtain} robust \change{and} invariant node embeddings for cross-network ranking tasks. \change{Specifically}, we introduce an influence-aware causal node embedding module within an autoencoder architecture to 
extract node embeddings that are causally related to node importance. \change{Furthermore, we design a unified optimization framework incorporating a causal ranking loss that jointly optimizes reconstruction and ranking objectives, thereby enabling mutual reinforcement between node representation learning and ranking optimization.}
This design allows the 
proposed model to be trained on synthetic networks and to generalize effectively across diverse real-world networks. Extensive experiments on multiple benchmark datasets demonstrate that the proposed model consistently outperforms state-of-the-art baselines in terms of both ranking accuracy and cross-network transferability, offering new insights for network analysis and 
engineering applications—particularly in scenarios where the target network’s structure is inaccessible in advance due to privacy or security constraints.
\end{abstract}

\begin{IEEEkeywords}
Complex networks, node importance, graph representation learning, node ranking, causal representation learning
\end{IEEEkeywords}}

\maketitle

\IEEEraisesectionheading{\section{Introduction}\label{sec:introduction}}

Complex networks provide a powerful framework for modeling and analyzing a wide range of systems across diverse domains, including social networks, 
transportation systems, and biological networks~\cite{costa2011analyzing}. In these networks, \change{nodes represent entities within real systems} such as individuals, 
infrastructure components, or functional units, 
while edges capture \change{the} interactions or relationships between them. A key challenge in network science and engineering is identifying important nodes, as they play pivotal roles in 
maintaining network functionality, performance, stability, and robustness\change{~\cite{kaili2025finding, zhan2025measuring}}.
For example, in social networks, \change{important nodes} facilitate targeted information dissemination and optimized resource allocation~\cite{zhang2022influence}.
In transportation networks, they enable vulnerability analysis and robustness enhancement~\cite{zhou2022analyzing}.
In biological networks, they help identify essential genes or proteins for therapeutic interventions~\cite{de2023more}. 
With the growing scale and complexity of real-world networks, accurately identifying critical nodes has become an increasingly important yet 
challenging task~\cite{ma2025node}. Hence, developing accurate and generalizable methods for node importance ranking holds substantial theoretical and practical 
significance~\cite{liu2023identify}.

There has been significant research on node importance ranking in complex networks. 
Traditional topology-based approaches primarily rely on centrality-based measures, such as degree centrality~\cite{bavelas1948mathematical}, eigenvector centrality~\cite{bonacich1972factoring}, and betweenness centrality~\cite{xu2025adaptive}. 
While some centrality measures are straightforward to compute (\emph{e.g.,} degree centrality), others (\emph{e.g.,} betweenness \change{centrality}) can be computationally demanding, particularly in large-scale networks.  Moreover, these methods 
focus on computing individual importance scores rather than modeling relative importance 
\change{relationships} among nodes, which limits their effectiveness \change{for} ranking tasks. In addition, 
most measures capture node importance along a single structural dimension and are often tailored to specific research \change{objectives}, which limits their generalizability across diverse network topologies. For instance, degree-centrality-based \change{methods} can identify high-degree hubs as the most influential nodes in a network. However, such an assumption overlooks the fact that influence can \change{also} arise from nodes occupying structurally balanced or strategically positioned roles, even when their degrees are moderate.



In recent years, deep representation learning–based methods have become a powerful paradigm for node ranking. 
By automatically learning expressive node embeddings, these methods enable downstream tasks such as 
regression to be efficiently performed on the learned embeddings. This category includes 
Graph Convolutional Networks (GCNs) for local neighborhood aggregation~\cite{zhao2020infgcn}, graph embedding techniques for dimensionality reduction~\cite{zhu2021community}, and more sophisticated architectures such as 
{graph attention networks}~\cite{park2019estimating} and graph contrastive learning frameworks~\cite{liu2023node} that capture nuanced structural dependencies. 

Despite their strong representational power, most existing deep representation learning methods focus solely on modeling network \change{structure}—capturing low- and high-order proximities \change{among} nodes—while overlooking node importance information that is crucial for ranking tasks.
For example,  AGNN~\cite{xiong2024vital} 
and CGNN~\cite{zhang2022new}  typically follow a two-stage paradigm, where node representations are first 
learned independently using {graph neural network} techniques, followed by a prediction module for ranking or regression. As task-specific objectives are not 
integrated into the representation learning stage, the resulting embeddings often fail to capture task-relevant information, leading to suboptimal performance in node importance ranking.
Moreover, these deep representation learning–based approaches typically rely heavily on the topology of the target network. 
However, in many real-world scenarios, privacy or security constraints make the network effectively a black box to users, restricting access to 
its underlying structure in advance~\cite{li2021manipulating}. 
For instance, in various strategic or security-sensitive contexts, identifying critical nodes within a non-cooperative network is essential for understanding or mitigating its functionality, yet the network topology is typically unknown in advance due to intelligence and security limitations.
Such situations exemplify a generic problem in network science and engineering—how to identify and rank critical nodes when the topology of the target network is unobservable.  
In this work, we specifically 
investigate whether feature embeddings that capture node importance can be learned through a representation learning model trained exclusively on 
synthetic networks and then transferred across multiple diverse real-world networks.

To achieve this goal, our approach is inspired \change{by} causal representation 
learning~\cite{chu2023continual}, which provides domain-invariant 
representations that generalize across different environments~\cite{zhou2025learning,zhou2022causal}. 
Specifically, we pursue two key objectives: (i) to obtain \change{an} effective node representation suitable for ranking without using the target network structure—enabling applicability in practical scenarios; and (ii) to improve ranking performance through mutual reinforcement 
between node representation learning and ranking optimization—ensuring more reliable and interpretable ranking results. 
Achieving these goals is nontrivial, as it raises two major challenges: 
(i) {how to design an effective unsupervised representation learning strategy that produces 
network-invariant node embeddings,} and (ii) how to ensure that the learned representations are inherently relevant 
and informative for node ranking.



To address the first challenge, we design an influence-aware causal node embedding
module within \change{an} autoencoder to 
capture robust low-dimensional embeddings that exhibit causal relevance to node importance. This mechanism 
enables the learned representations to be network-invariant and to 
generalize effectively to unseen target graphs. To tackle the second challenge, we formulate a unified objective function that jointly optimizes our proposed causal reconstruction and causal ranking losses, enabling a synergistic interaction between representation learning and ranking optimization. 
Our contributions can be summarized as follows:
\begin{itemize}
	\item \textit{Influence-aware causal node embedding mechanism:  We design an influence-aware causal node embedding mechanism for node ranking prediction and integrate it into an autoencoder framework.} Specifically, we construct a node influence variable based on the information propagation process within the training network, and learn the causal relationships between node embeddings and this influence variable. This design encourages the learned node representations to capture network-invariant causal signals related to node importance, 
    enabling the model to be trained solely on synthetic networks and to generalize effectively to diverse real-world graphs.
	\item \textit{Feature–task co-optimization mechanism:} We propose a causal reconstruction loss and a causal ranking loss, and integrate 
    them with a regularization term into a unified objective. This feature-task co-optimization framework jointly 
    optimizes node representation learning and ranking prediction, ensuring that the 
    {resulting} embeddings are directly aligned with the downstream ranking task.
    \item \textit{Extensive empirical studies on real-world networks:}  We conduct \change{comprehensive} experiments to validate the effectiveness of the proposed method. The results demonstrate that it improves the performance of 
    the node ranking model on various real-world networks in terms of both accuracy and generalization. Ablation studies further highlight the contribution of the designed influence-aware causal mechanism.
\end{itemize}

The remainder of this paper is organized as follows.  Some related work is outlined in Section~\ref{sec4}. Problem formulation is introduced in Section~\ref{sec2}. The proposed influence-aware causal autoencoder model for node importance ranking is presented in detail in Section~\ref{sec3}. The experimental results are reported in {Section~\ref{exp}}. Conclusions are drawn in the final section.


 

\section{Related work}\label{sec4}

\begin{table*}[htbp]
	\caption{Comparative analysis of different node importance ranking methods.} 
	\label{tab:related_work} 
	\centering
	\begin{tabular}{m{0.17\textwidth} m{0.25\textwidth} m{0.25\textwidth} m{0.23\textwidth}}
		\toprule
		 Categories & Methods & Contributions &Limitations  \\ 
		\midrule
		
		  Neighborhood-based & DC~\cite{nieminen1974centrality}, $H$-index~\cite{hirsch2005index}, $K$-shell~\cite{kitsak2010identification} & Focus on local structural information. &\multirow{9}{=}{These methods measure node importance from a single structural dimension, which limits their applicability to more complex network structures.} \\
          \cmidrule{2-3}
		  Eigenvector-based & EC~\cite{bonacich1987power}, \change{KC}~\cite{katz1953new}, PageRank~\cite{page1998pagerank} & Leverage spectral characteristics and iterative propagation mechanisms to evaluate global node influence.   \\
          \cmidrule{2-3}
		  Path-based & BC~\cite{freeman1977set}, ECC~\cite{hage1995eccentricity}, CC~\cite{freeman1978centrality}, ViralRank~\cite{iannelli2018influencers} & Assess node importance via paths, distances, or random walks to capture multi-scale topological information.   \\
          \cmidrule{2-4}
          Fusion-based & ISCL~\cite{mohammad2024convex}, HCT~\cite{wang2024identification}, \change{LHGC~\cite{xie2023vital}, EffG~\cite{shang2021identifying}}, CBGN~\cite{liu2023influence}, ITFRSGC~\cite{an2025identifying}, \change{HDF~\cite{zhang2025locating}, GMM~\cite{liu2020gmm}}
          & Blend multiple centrality metrics or topological characteristics to create hybrid measures.
          & These methods have high computational complexity and are sensitive to parameter tuning, demanding manual adjustment.
          \\
          \midrule
          Machine learning-based & ML~\cite{hajarathaiah2024node}, LS-SVM~\cite{wen2018fast}, MFIM~\cite{wang2024multi} & Identify influential nodes by formulating the task as a regression or classification problem. & These methods rely on explicit network topology information, limiting their generalization performance across networks.\\
          \midrule 
		   \multirow{5}{*}{Deep learning-based}  & InfGCN~\cite{zhao2020infgcn},  NDM~\cite{ahmad2024neural}, CGNN~\cite{zhang2022new}, ReGCN~\cite{wu2024graph}, \change{GCNT}~\cite{tang2024gcnt} & Integrate 
           handcrafted features with deep learning architectures for influential node identification. & These methods rely on handcrafted features and are unable to capture complex structural patterns.  \\
           \cmidrule{2-4}
          & AGNN~\cite{xiong2024vital},
          GCRA~\cite{kaili2025finding}, MVPL~\cite{ma2025node},
          RCNN~\cite{yu2020identifying},
          STGCN~\cite{huang2025sparse}, \change{SS-GCN}~\cite{liu2022learning}
          
         & Directly generate latent node embeddings from graph topology and subsequently calculate importance scores. & These methods focus solely on modeling network structures but fail to capture intrinsic features related to node importance, resulting in weak generalization across networks.         
         \\ 
		\bottomrule
	\end{tabular}
\end{table*}

Our framework is a node-ranking method built on causal representation learning. Below, we review the two closely related research areas: 
node importance ranking and causal representation learning.
\subsection{Node importance ranking}
Previous works for identifying vital nodes in complex networks can be categorized into three main groups: traditional topology-based methods, machine learning-based methods and deep learning-based methods, as detailed in Table~\ref{tab:related_work}.
The first category, topology-based node importance estimation methods, mainly includes neighborhood-based, eigenvector-based, path-based and fusion-based approaches. Neighborhood-based methods, such as Degree Centrality (DC)~\cite{nieminen1974centrality}, $H$-index~\cite{hirsch2005index}, and $K$-shell decomposition~\cite{kitsak2010identification}, assess node importance using local structural information. In contrast, eigenvector-based methods like Eigenvector Centrality (EC)~\cite{bonacich1987power}, \change{Katz Centrality (KC)}~\cite{katz1953new} and PageRank~\cite{page1998pagerank} capture global influence propagation. Path-based methods, including Betweenness Centrality (BC)~\cite{freeman1977set}, Eccentricity Centrality (ECC)~\cite{hage1995eccentricity}, Closeness Centrality (CC)~\cite{freeman1978centrality} and ViralRank~\cite{iannelli2018influencers}, evaluate nodes based on their positions and roles in network paths. 
While these methods offer computational simplicity and ease of interpretation, it is evident that these centrality measures depend solely on the network structure and capture only one specific structural aspect. Consequently, they still face challenges in scalability and adaptability when applied to more complex network structures. Fusion-based methods blend multiple centrality metrics or topological characteristics 
to create hybrid measures. Specifically, Mohammad \emph{et al.}~\cite{mohammad2024convex} proposed a novel centrality measure called ISCL (Isolating Clustering Centrality), which employs a convex combination of isolating centrality and clustering coefficient with a tuning parameter to identify influential nodes in large-scale networks efficiently. Wang \emph{et al.}~\cite{wang2024identification} introduced the HCT method, which combines a hypergraph fuzzy gravity model, node centrality distribution characteristics, and TOPSIS to progressively identify important nodes in multi-layer hypergraphs from local to global levels. \change{Xie \emph{et al.}~\cite{xie2023vital} introduced LHGC, which incorporates node degree and higher-order distance in a gravitational model. Shang \emph{et al.}~\cite{shang2021identifying} proposed an effective distance gravity (EffG) model that integrates degree centrality and effective distance to identify influential nodes by considering both static topological information and dynamic interaction patterns in complex networks.} Liu \emph{et al.}~\cite{liu2023influence} proposed a Community-based Backward Generating Network (CBGN) framework for identifying influential nodes by integrating community detection, graph traversal, and an improved reverse-generation strategy to efficiently select high-impact seeds while minimizing influence overlap. An \emph{et al.}~\cite{an2025identifying} proposed the Intrinsic and Topological Features-based Relationship Strength Gravity Centrality (ITFRSGC) that identifies key nodes in unweighted multi-layer networks by modeling hidden relationship strengths through multi-feature fusion and incorporating a dynamic influence range into a gravity centrality model. 
\change{Zhang \emph{et al.}~\cite{zhang2025locating} proposed Higher-order Distance-based Fuzzy centrality methods (HDF) that leverage fuzzy sets and Shannon entropy to identify influential nodes in hypergraphs by considering the collective influence of nodes within a certain higher-order neighborhood. Liu \emph{et al.}~\cite{liu2020gmm} introduced a generalized mechanics model (GMM) that identifies node importance by integrating local degree information and global eigenvector-based weights within a weighted gravity framework, incorporating a truncation radius for scalability.}
Nevertheless, these methods suffer from high computational complexity and sensitivity to parameter tuning, often requiring manual adjustment.
 
Machine learning–based approaches can partially overcome these issues by transforming the task into a regression or classification problem, using models like Linear Regression (LR), 
Multilayer Perceptron (MLP) and Support Vector Machine (SVM). Hajarathaiah \emph{et al.}~\cite{hajarathaiah2024node} proposed a machine learning framework that combines traditional and local relative-change centrality metrics with epidemic simulation outcomes to estimate node importance. Wen \emph{et al.}~\cite{wen2018fast} developed a two-level node ranking framework that integrates AHP-based evaluation 
with Least Squares Support Vector Machine (LS-SVM) learning to approximate global centrality measures using simple local indicators, enabling efficient and accurate node importance estimation. Wang \emph{et al.}~\cite{wang2024multi} introduced a Multi-Factor Information Matrix (MFIM) that integrates node self-influence, neighbor influence, and mutual interaction through hybrid weighting to identify super-propagators in directed weighted social networks. While these methods significantly enhance computational efficiency and provide robust solutions, they still rely on explicit network topology information, which may 
limit their generalization performance in more complex scenarios.

Recent advancements in deep learning, particularly Graph Neural Networks (GNNs), have demonstrated superior performance by learning rich and informative representations of nodes. 
Zhao \emph{et al.}~\cite{zhao2020infgcn} proposed the InfGCN
algorithm that takes neighbor graphs and classic structural features as input into a graph convolutional network for learning node representations, 
followed by a classification layer. 
Zhang \emph{et al.}~\cite{zhang2022new} combined convolutional neural networks (CNN) with GNNs in their CGNN algorithm, where nodes are labeled via the SIR model, and a loss function is optimized to accurately identify critical nodes.
Ahmad and Wang~\cite{ahmad2024neural} proposed a Neural Diffusion Model (NDM) that leverages recurrent neural networks (RNNs) for dynamically modeling temporal influence propagation across networks.
Wu \emph{et al.}~\cite{wu2024graph} proposed ReGCN, a graph convolutional network model that leverages regular equivalence-based similarity to construct node features and employs an adaptive eigenvector selection strategy for influential node identification. \change{Tang \emph{et al.}~\cite{tang2024gcnt} proposed GCNT that integrates graph convolutional networks with graph transformers and a Greedy-LIE label generation strategy to effectively identify influential seed nodes by jointly modeling local structure, global dependencies, and overlapping influence in social networks.} Although these methods have shown considerable impact, they are inherently limited by their reliance on handcrafted features, which often 
reduces their ability to generalize across diverse network structures and capture complex structural patterns. 

To overcome these limitations, Kaili \emph{et al.}~\cite{kaili2025finding} proposed GCRA, a deep learning-based framework that integrates Adaptive Graph Contrastive Learning (AdaGCL) with Deep Reinforcement Learning (DRL) to identify critical nodes by adaptively augmenting graph views and utilizing a dual-optimization reward function for network disintegration. Ma \emph{et al.}~\cite{ma2025node} proposed a multi-view graph prompting framework (MVPL) that integrates node-level soft prompts and graph-level structural prompts to bridge the gap between pre-training and downstream node importance estimation tasks. Yu \emph{et al.}~\cite{yu2020identifying} developed RCNN, a method that 
integrates GCNs with CNN-based adjacency features for efficient critical node detection. \change{Liu \emph{et al.}~\cite{liu2022learning} proposed a self-supervised graph convolutional network (SS-GCN) that models complex dependencies among multivariate time series by integrating graph structure learning with deep neural networks, enabling more accurate detection of anomalous patterns through relational reasoning across variables.}
Xiong \emph{et al.}~\cite{xiong2024vital} introduced the 
AGNN algorithm, which integrates autoencoders with GNNs to generate node embeddings using graph convolutional networks and optimizes ranking predictions based on the 
learned representations. Building on this work, Huang \emph{et al.}~\cite{huang2025sparse} proposed a Sparse Transformer-enhanced Graph Convolutional Network (STGCN) 
to overcome AGNN’s limitations in large-scale and heterogeneous networks. STGCN integrates local topological modeling with global structural dependencies via stochastic anchor attention and masked 
mechanisms, enabling robust node-importance ranking in complex networks. Although these approaches bypass manual feature engineering by employing GNNs to learn topological embeddings and 
are capable of being pretrained on synthetic networks for ranking prediction on other real-world networks, they still face significant limitations. 
The learned representations are often produced by black-box models that lack an explicit and 
intrinsic connection to node importance, thereby limiting their generalizability across diverse networks. Our framework follows a similar paradigm to AGNN~\cite{xiong2024vital} and STGCN~\cite{huang2025sparse}, in which 
models are trained on generative networks to enable ranking prediction on real-world networks. However, 
unlike these approaches, ICAN introduces a causal node embedding mechanism that can capture the causal relevance of representations to node importance, thereby enhancing both interpretability and generalization capability.

\subsection{\change{Task-oriented graph representation learning}}
\change{To bridge the gap between representation learning and downstream applications, recent studies have shifted towards optimizing embeddings specifically for the target task. For instance, Zhu \emph{et al.}~\cite{zhu2021community} proposed a Structural Equivalence-based Non-negative Matrix Factorization (SENMF) method for community detection. By integrating node similarity and structural equivalence into a unified optimization framework involving modularity maximization, SENMF learns embeddings that are intrinsically aligned with the community structure. Similarly, Huang \emph{et al.}~\cite{huang2021representation} introduced a Relational Graph Transformer Network (RGTN) specifically for node importance estimation in knowledge graphs. RGTN employs a learning-to-rank loss to guide the representation learning process, ensuring that the learned embeddings capture relative ranking information and semantic dependencies tailored to the importance estimation task. Wilder \emph{et al.}~\cite{wilder2019end} proposed CLUSTERNET, a decision-focused learning framework that integrates graph neural networks with a differentiable k-means clustering layer to jointly learn graph representations and solve downstream combinatorial optimization problems end-to-end. Deng \emph{et al.}~\cite{deng2021graph} proposed Graph Deviation Network (GDN), an attention-based graph neural network that jointly learns inter-sensor dependency structures and forecasts multivariate time series to detect and explain anomalies as deviations from learned sensor relationships. 
}

\change{While these task-specific embedding methods effectively align feature learning with the target objectives and outperform decoupled two-stage approaches, they primarily rely on fitting statistical correlations between topological features and task labels within the training distribution. Consequently, they may capture spurious correlations rather than the underlying generative mechanisms of node importance. To circumvent this inherent limitation, ICAN fundamentally shifts the learning paradigm by incorporating causal inference. ICAN is designed to identify and learn the invariant structural and relational factors that causally determine node importance, thereby achieving superior robustness and cross-network generalization capacity.}

\subsection{Causal representation learning}
Causal relationships have the ability to capture the fundamental mechanism of data generation and are stable across various contexts, \change{mitigating the impact of confounding factors}~\cite{jiang2025integrating,zhou2025vulnerability}. Learning causal representations is significant for enhancing the robustness of predictive models. Zheng \emph{et al.}~\cite{zheng2018dags} introduced NOTEARS, a novel method that reformulates the DAG discovery problem as a continuous optimization task incorporating acyclicity constraints, enabling solution via numerical techniques. Subsequently, Ng \emph{et al.}~\cite{ng2019graph} expanded upon NOTEARS within a graph autoencoder framework, leveraging multilayer perceptrons (MLPs) to capture and model nonlinear structural equations, thereby advancing the capability to handle complex, non-linear relationships in causal inference. Yang~\emph{et al.} \cite{yang2021learning} integrated a deep autoencoder and a causal structure learning model to learn causal representations using data. Yu \emph{et al.}~\cite{yu2019dag} proposed DAG-GNN, a deep generative model that uses a graph neural network-based variational autoencoder to learn DAG structures from data, effectively generalizing linear structural equation models to capture nonlinear relationships and handle diverse variable types. \change{Wu \emph{et al.}~\cite{wu2022dir}  proposed DIR that constructs intrinsically interpretable graph neural networks by conducting interventions on the training distribution to create interventional distributions, thereby identifying causal rationales invariant across distributions while filtering out spurious patterns for improved out-of-distribution generalization and interpretability. Fan \emph{et al.}~\cite{fan2022debiasing} proposed DisC that learns disentangled causal and bias substructures by employing a parameterized edge mask generator to split input graphs into causal and bias subgraphs, training separate GNN modules with causal-aware and bias-aware loss functions, and generating counterfactual unbiased samples to decorrelate the variables for improved generalization. Chen \emph{et al.}~\cite{chen2022learning} proposed CIGA that extracts invariant subgraphs containing causal information about labels to learn causally invariant representations for guaranteed out-of-distribution generalization on graphs under various distribution shifts. Mo \emph{et al.}~\cite{mo2024graph} proposed GCIL, a causal-inspired graph contrastive learning method that uses spectral graph augmentation to intervene on non-causal factors and incorporates invariance and independence objectives to extract causal information for invariant representation learning.} \change{Most of these} causal representation learning methods aim to uncover causal relationships among variables, but they are not tailored for downstream ranking tasks, making it difficult to ensure the learned representations are effective for ranking.

\change{We summarize the key lessons learned from the related work as follows:}
\begin{itemize}
  \item \change{Most existing node importance ranking methods rely heavily on network topology and decouple representation learning from ranking objectives, which limits their generalization across different networks.}
  \item \change{Task-oriented graph representation learning improves feature--task alignment, but remains largely correlation-driven and vulnerable to spurious, network-specific patterns.}
  \item \change{Causal representation learning offers invariant and robust representations, yet has rarely been explored in the context of node importance ranking.}
\end{itemize}

\change{Motivated by these observations, ICAN integrates a causal ranking loss into the representation learning process, enabling the model to learn robust, low-dimensional node embeddings that are generalizable across diverse networks and well suited for node importance ranking tasks.}

\section{Problem formulation}\label{sec2}
In this section, we define the problem of node importance ranking in complex networks and introduce the key assumptions of the proposed model.
\subsection{{Problem statement}}
Given a graph \( G = \{V, E\} \), where $V$ is the set of nodes and $E$ is the set of edges, \change{the} notation  \( v_i \in V \) denotes a node in $V$, and \( e_{ij} \in E \) denotes an edge from node \( v_i \) to \( v_j \). 
The graph can be represented by an \( n \times n \) adjacency matrix \( \bm{A} \), where \( n \) denotes the number of nodes. The entries of the matrix are defined such that \( \bm{A}_{ij} = 1 \) \change{if} \( e_{ij} \in E \), and \( \bm{A}_{ij} = 0 \) otherwise. 
Let $\bm{X} \in \mathbb{R}^{n \times d}$ \change{denote} the feature matrix. For graphs without node attributes, $\bm{X}$ can be obtained by applying a node embedding method (\emph{e.g.,} \change{node2vec}~\cite{grover2016node2vec}) to the adjacency matrix $\bm{A}$. 



The node importance ranking is a critical problem in network analysis. Given the graph \( G = \{V, E\} \), our task is to assign a score $s_i$ to each node \( v_i \in V \), forming a score vector \(\bm{S} = (s_1, s_2, \dots, s_n)\),  These scores are then used to derive a node importance ranking \(\bm{R} = (r_1, r_2, \dots, r_n)\), where $r_i$ denotes the rank of node $v_i$. The ranking $\bm{R}$ satisfies the following: \( r_i = r_j \) if \( s_i = s_j \), and \( r_i < r_j \) if \( s_i > s_j \), meaning that a higher score corresponds to a higher (\emph{i.e.,} numerically lower) rank. Unlike methods that emphasize the absolute magnitude of scores, our approach focuses on preserving the relative 
ordering of nodes consistent with their underlying importance, making ranking consistency the primary objective.


\subsection{{Assumptions}}

Our proposed method is designed to learn a robust low-dimensional node representation for predicting node 
importance ranking in complex networks. It is worth noting that our approach is proposed based on the following fundamental assumptions. 


\begin{assumption}\label{assumptions1}
	\change{(Markov condition \cite{fields1990independence})  Given a Directed Acyclic Graph (DAG) $\mathbb{G}$, each variable $X$ in $\mathbb{G}$ is independent of any subset of its non-descendants conditioning on its parents.}
\end{assumption}

\begin{definition}
\change{The triplet $<U,\mathbb{G}, P>$, where $U$ denotes the set of variables, $\mathbb{G}$ is a DAG on $U$, and $P$ denotes the probability distribution of $U$, 
is called a Bayesian network iff it satisfies the Markov condition.}
\end{definition}

\begin{assumption}\label{assumptions2}
\change{(Causal graph \cite{pearl2014probabilistic}) For Bayesian Network 
	$<U,\mathbb{G}, P>$, it can be used to express the causal relationships between the variables in $U$. In DAG $\mathbb{G}$, for a pair of directly connected parent-child variables, the parent variable is the direct cause of the child variable, and the child variable is the direct outcome of the parent variable. We assume that the causality between variables can be expressed by $\mathbb{G}$ as a causal graph.}
\end{assumption}


\begin{definition}
\change{($d$-separation~\cite{kuang2022stable,pearl2009causality}) In a DAG $\mathbb{G}$, a path $\pi$ is said to be blocked by a set of nodes $C$ if and only if (i) $\pi$ contains a chain $V_i \rightarrow V_k \rightarrow V_j$ or a fork $V_i \leftarrow V_k \rightarrow V_j$ such that the middle node $V_k$ is in $C$, or (ii) $\pi$ contains a collider $V_i \rightarrow V_k \leftarrow V_j$ such that the middle node $V_k$ is not in $C$ and such that no descendant of $V_k$ is in $C$.}
\change{Furthermore, in a DAG $\mathbb{G}$, we say that two disjoint subsets of vertices $A$ and $B$ are \textbf{d-separated} by a third (also disjoint) subset $C$ if every path between nodes in $A$ and $B$ is blocked by $C$. We then write
$ A \perp_{\mathbb{G}} B \, | \, C. $}

\end{definition}


\begin{assumption}\label{assumptions3}
	\change{(Faithfulness \cite{pearl2009causality,guo2023adaptive}) Consider a distribution $P$ and a DAG $\mathbb{G}$. $P$ is faithful to the DAG $\mathbb{G}$ if
    \[ A \perp B \, | \, C \implies A \perp_{\mathbb{G}} B \, | \, C \]
    for all disjoint vertex sets $A, B, C$.}
\end{assumption}

\begin{definition}
\change{(Markov Blanket (MB)) Consider a DAG $\mathbb{G}$ and a target node $Y$. Then, the Markov Blanket MB of $Y$ includes its parents, its children, and the parents of its children.}
\end{definition}
\change{Faithfulness ensures that all conditional independencies in the data \change{correspond to $d$-separations} 
in the causal graph. \change{Under this assumption and Markov condition}, all other features that are not in the 
Markov blanket set of the target variable (consisting of the parents, children, and spouses) 
are conditionally independent of the target variable, conditioning on the MB set. This means that the information of the non-Markov blanket feature representations is blocked given the Markov blanket feature representations~\cite{yang2021learning}.}

\section{Proposed method}\label{sec3}
In this section, we first show the overall framework of the proposed method, 
the Influence-aware Causal Autoencoder model for Node importance ranking in complex networks (ICAN),  and then describe its components in detail.
\subsection{Overview of the method}
The framework of ICAN is illustrated in Fig.~\ref{framework}. ICAN consists of two core modules: the causal node embedding module and the causal ranking prediction module. The causal node embedding module aims to learn low-dimensional 
node embeddings that are causally related to node importance. Under Assumption \ref{assumptions1}, the causality among node embeddings can be described by a DAG $\mathbb{G}$. Upon obtaining the $p$-dimensional $(p \leq d)$ feature representation, 
the node influence score variable $\bm{Y}$, which can be used to characterize the node importance, is introduced 
as an additional node to form a new hidden layer. The adjacency matrix \( \bm{W} = \{w_{ij}\}_{(p+1)\times(p+1)} \) associated 
with \( \mathbb{G} \), which encodes the causal relationships among  the low-dimensional 
node embeddings \( \bm{H}^{(l)} \), is then integrated into the autoencoder framework to enable message passing under causal mechanisms. 
By optimizing \( \bm{W} \), ICAN can learn optimal representations that are causally linked to node importance. 
The learned latent representations are then fed into the causal ranking prediction module, where the \change{Markov blanket} of the node 
importance score variable is leveraged to predict the ranking. In this way, ICAN can produce robust \change{and} network-invariant node embeddings, enabling cross-network importance ranking predictions. 
In the following, we provide a detailed description of the ICAN framework. 

\begin{figure*}[htbp]
	\centerline{\includegraphics[scale=0.45]{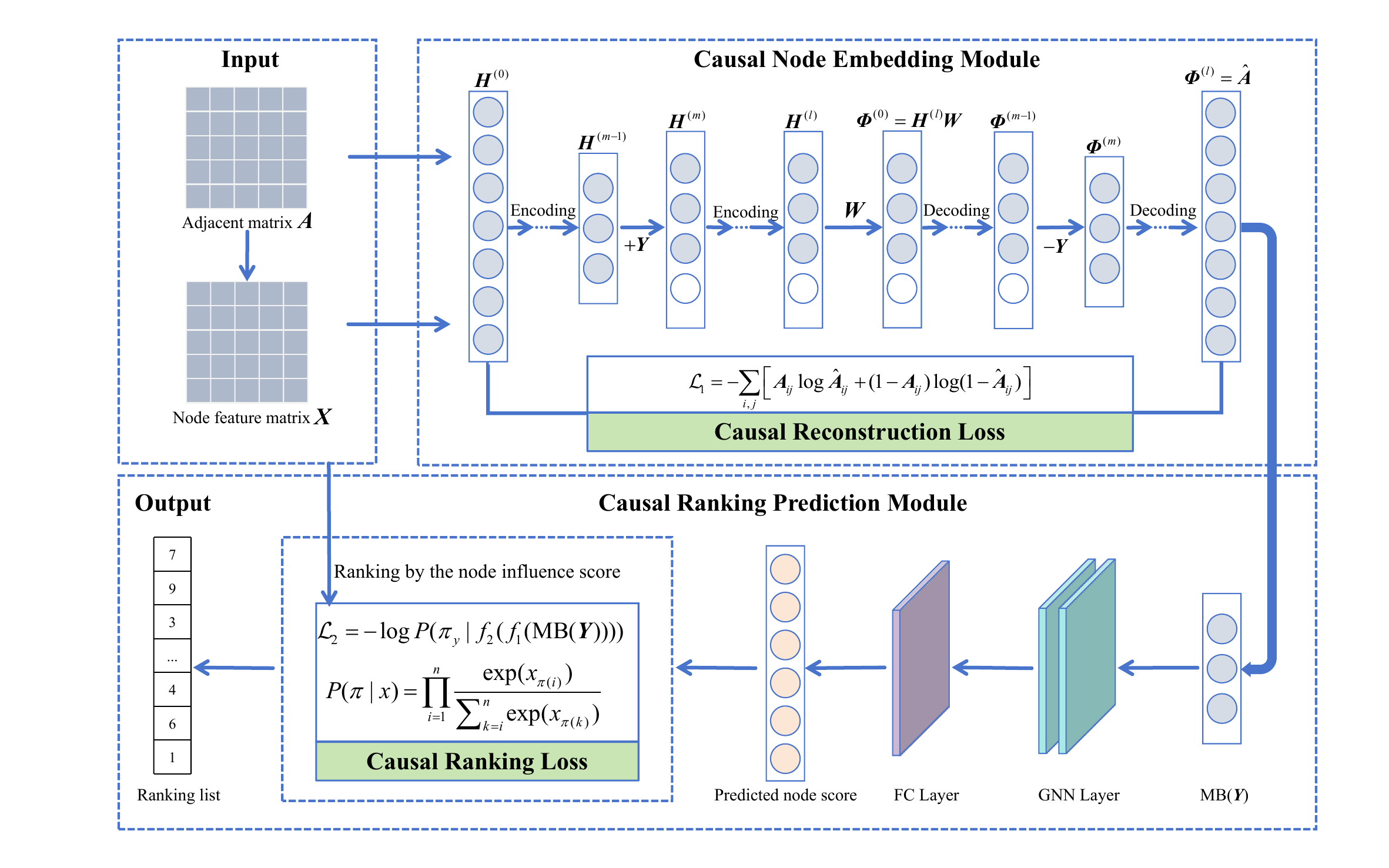}}
	\caption{\change{The framework outline of ICAN. The model takes {the adjacency matrix $\bm{A}$ as inputs} and processes them through two key components:  (1) Causal node embedding module, which learns low-dimensional node embeddings that are causally related to node importance; and (2) Causal ranking prediction module, which utilizes the Markov Blanket of the node 
influence score variable to predict the final ranking list. This model, trained solely on synthetic networks, generalizes effectively to diverse real-world graphs for node importance ranking.  
} }
	\label{framework}
\end{figure*}

\subsection{{Causal node embedding module}}
The goal of the causal node embedding module is to infer causal relationships among low-dimensional node representations. In general, this 
module consists of two main components: an encoder and a decoder.  
We first generate a $d$-dimensional
feature matrix $\bm{X}$ from the graph adjacency matrix $\bm{A}$ using \change{node2vec}. The encoder, typically a Graph Convolutional Network (GCN), then takes both $\bm{X}$ and $\bm{A}$ as input.
It learns to map this input 
into a {$p$}-dimensional low-dimensional representation. The decoder then uses this learned representation to reconstruct the original adjacency matrix. This integrated process of encoding and decoding can be summarized as follows:

\noindent Encoder:
\begin{align}
	& {\bm{H}^{(0)}=\mathrm{ReLU}(\widetilde{\bm{A}}\bm{X}\bm{w}_1^{(0)}+\bm{b}_1^{(0)});}  \\
	&{\bm{H}^{(i)}} = \mathrm{ReLU}(\widetilde{\bm{A}}{\bm{H}^{(i-1)}}\bm{w}_1^{(i)}+\bm{b}_1^{(i)}), i= 1,\cdots,m-1;  \\
	&{\bm{H}^{(m)}} = [{\bm{H}^{(m-1)}}, {\bm{Y}}]; \label{includeys} \\
	&{\bm{H}}^{(i)} = \mathrm{ReLU}(\widetilde{\bm{A}}{\bm{H}^{(i-1)}}\bm{w}_1^{(i)}+\bm{b}_1^{(i)}), i= m+1, \cdots,l.
\end{align}
Decoder:
\begin{align}
	& {\bm{\varPhi}^{(0)}=\bm{H}^{(l)}\bm{W};}  \\
	&{\bm{\varPhi}^{(i)}} = \mathrm{ReLU}({\widetilde{\bm{A}}\bm{\varPhi}^{(i-1)}}\bm{w}_2^{(i)}+\bm{b}_2^{(i)}),i= {1},\cdots,m-1;  \\
	&{\bm{\varPhi}^{(m)}} = {\bm{\varPhi}^{(m-1)}}[:,1:({p}-1)]; \label{removeys}  \\
	&{\bm{\varPhi}^{(l)}}=\mathrm{Sigmoid}(\bm{\varPhi}^{(m)}(\bm{\varPhi}^{(m)})^T).
\end{align}
Here, $\widetilde{\bm{A}}=\bm{D}^{-\frac{1}{2}}(\bm{A}+\bm{I})\bm{D}^{-\frac{1}{2}}$ and $\bm{D}$ is the degree matrix. $l$ denotes the number of hidden layers.
{$M[:,1:({p}-1)]$ denotes the matrix composed of the first $p-1$ columns}. Let
$\bm{\varPhi}^{(l)}=\hat{\bm{A}}$ be the reconstructed adjacency matrix. 
$\bm{w}_1^{(i)}$ and $\bm{b}_1^{(i)}$ denote the weight matrix and bias vector on the $i^{th}$ encoding layer, respectively. 
Similarly, $\bm{w}_2^{(i)}$ and $\bm{b}_2^{(i)}$ denote the weight matrix and bias vector on the $i^{th}$ decoding layer, respectively.
\change{The dimensions of the intermediate matrices are defined as $\bm{H}^{(i)} \in \mathbb{R}^{n \times p}$ and $\bm{\varPhi}^{(i)} \in \mathbb{R}^{n \times (p+1)}$ for $i \in \{0, \dots, m-1\}$; $\bm{H}^{(i)} \in \mathbb{R}^{n \times (p+1)}$ for $i \in \{m, \dots, l\}$; and  $\bm{\varPhi}^{(m)} \in \mathbb{R}^{n \times p}$, $\bm{\varPhi}^{(l)} \in \mathbb{R}^{n \times n}$.} Within this structure, $\bm{H}^{(m-1)} \in \mathbb{R}^{n \times p}$ in the encoding process can be seen as the low-dimensional representations.

\ptitle{Influence-aware causal node embedding mechanism.} Based on Assumption 1, learning the causal relationships among variables can be formulated as 
learning the matrix 
$\bm{W}$ associated with the DAG $\mathbb{G}$. To obtain node embeddings that are causally related to node importance, 
we introduce a pseudo-label variable $\bm{Y}$, termed the node influence score, which characterizes node importance 
based on the SIR model—a classical epidemic framework describing a spreading process where each node can be in one of 
three states: susceptible (S), infected (I), or recovered (R). At each time step, an infected node infects its susceptible neighbors with probability $\gamma$ and recovers with probability $\delta$, after which it cannot be infected again. The process continues until no infected nodes remain.
This model captures the key dynamics of contagion and recovery, which are also relevant to information diffusion and network resilience. Therefore, it provides a reliable basis for evaluating the spreading ability of nodes and for training deep models to understand real-world network dynamics. In our experiments, each node $v$ is initialized as the only infected node, while all others are susceptible. The final number of infected nodes when the process stabilizes is denoted as $F_v$, representing the spreading influence of node $v$. {The average value of $F_v$ over 100 simulations is used as the influence score, denoted by $\bm{Y}_v$. Then we run the SIR model for each node in the network to obtain the influence score variable $\bm{Y}$.


To identify the causal relationship between the node feature variables and  the node influence
score variable, the hidden layer must 
	include the information of $\bm{Y}$  in the training process. Thus, starting with low-dimensional feature representation $\bm{H}^{(m-1)}$, ICAN incorporates $\bm{Y}$ as an additional node to construct a new hidden layer $H^{(m)}$ (Eq.~\eqref{includeys}). 
	Subsequently, $\bm{Y}$ is removed after the $(m-1)^{th}$ decoding layer (Eq.~\eqref{removeys}).
	
The {causal} reconstruction loss $\mathcal{L}_1$ can be defined as 
\begin{equation}\label{l1} 
	\mathcal{L}_{1} = - \sum_{i,j} \left[ \bm{A}_{ij} \log \hat{\bm{A}}_{ij} + (1 - \bm{A}_{ij}) \log (1 - \hat{\bm{A}}_{ij}) \right].
\end{equation}
Based on the above computational process, the reconstructed adjacency matrix $\hat{A}$ can be represented as follows: 
\begin{equation}
	\hat{\bm{A}} = \sigma(\bm{\varPhi}^{(m)} (\bm{\varPhi}^{(m)})^{T}),~~ \bm{\varPhi}^{(m)} = g_2( g_1(\bm{X}, \bm{A})\bm{W}),
\end{equation}  
{where} \change{the adjacency matrix $\bm{W}_{(p+1)\times(p+1)}$ associated 
with \( \mathbb{G} \)} is required to satisfy the constraint {in Eq.~}\eqref{eq:acyclicity_constraint}.  $g_1$ and $g_2$ can be regarded as two nonlinear functions used to describe the encoding and decoding process, respectively. Therefore, it is noted here that the causal node embedding module can be seen as a graph autoencoder.

\change{Utilizing the learned adjacency matrix $\bm{W}$ associated with the DAG $\mathbb{G}$, the Markov blanket of the node influence variable $\bm{Y}$ can be precisely identified. Under the Markov condition (Assumption 1) and Faithfulness assumptions (Assumption 3), $Y$ is conditionally independent of all other variables in $\mathbb{G}$ given its Markov blanket. Generally, $\text{MB}(\bm{Y})$ constitutes the minimal sufficient set that shields $\bm{Y}$ from all other non-descendant nodes.  In practice, while the direct causal dependencies are manifested as non-zero entries in the row and column of $\bm{W}$ associated with $\bm{Y}$, the complete MB set is extracted by traversing the topological connectivity of $\mathbb{G}$. By isolating these causal feature indices, we effectively filter out task-irrelevant embeddings that exhibit only spurious correlations with the target. Consequently, the low-dimensional node embeddings in $\bm{H}^{(m)}$ can be divided into two groups: 
causal node embeddings $\text{MB}(\bm{Y})$, and the other feature representations which may be task-irrelevant features. As such, utilizing this invariant subset for node-level prediction tasks enhances model performance and generalization capability on unseen networks. }

\subsection{{Causal ranking prediction module}} 
Next, we develop the causal ranking prediction module to assign a score to each node in the graph. The causal feature set $\text{MB}(\bm{Y})$ is utilized as the input to the ranking prediction module. The hidden layer of the rank prediction module consists of two layers of GNN and one layer of the FC layer, as expressed by the following equations.
\begin{align}
	& {\bm{E}^{(0)}=\text{MB}(\bm{Y});}  \\
	&{\bm{E}^{(i)}} = \mathrm{Sigmoid}\left(\bm{A}\bm{E}^{(i)}\bm{w}_{3}^{(i)}+\bm{b}_{3}^{(i)}\right), i= 1,\cdots,t;\\
	&{\bm{E}^{(t+1)}} =\bm{E}^{(t)}\bm{w}_{3}^{(t+1)}+\bm{b}_{3}^{(t+1)}.
\end{align}
Here, 
$\bm{A}$ represents the adjacency matrix of the graph, and $\bm{w}_{3}^{(i)}$ and $\bm{b}_{3}^{(i)}$ are the weight matrix and bias matrix respectively, \change{$\bm{E}^{(i)} \in \mathbb{R}^{n \times p}$.}

{\ptitle{Task-aware ranking loss function.}} 
{We introduce a new ranking loss function, CausalListMLE, which can be viewed as an extension of ListMLE~\cite{xia2008listwise} for 
optimizing the overall ranking 
prediction module. This list-wise ranking loss} directly maximizes the likelihood of the entire ranked permutation, rather than focusing 
on individual scores as in point-wise losses like MSE.
By modeling the ranking distribution, CausalListMLE focuses on preserving the correct relative order among all items, 
and typically achieves better performance in ranking-based learning tasks. The causal ranking loss $\mathcal{L}_2$ is defined as
\begin{equation}\label{l2} 
	\mathcal{L}_2=-\log P(\pi_y\mid f_2(f_1(\mathsf{MB}(\bm{Y})))),
\end{equation}
where {the node influence score} $\bm{Y}$ is generated via the SIR model, and $\pi_y$ represents the corresponding ranking results. $f_1$ and $f_2$ denote the GNN layer and FC layer, respectively.
The probability $P(\pi_y \mid f_2(f_1(\mathsf{MB}(\bm{Y}))) )$ follows the Plackett--Luce model, expressed as
\begin{equation}
	P(\pi \mid x) = \prod_{i=1}^{n} 
	\frac{\exp(x_{\pi(i)})}
	{\sum_{k=i}^{n}\exp(x_{\pi(k)})},
\end{equation}
where $x_{\pi(i)}$ is the predicted score of the node at position $i$ in the ranking $\pi$.
By maximizing this likelihood, the causal ranking loss encourages the model to produce \change{a} ranking that closely \change{aligns} with the influence score.

 \change{Remark: It is essential to clarify that while the mathematical formulation of CausalListMLE (Eq.~\eqref{l2}) shares the same Plackett–Luce likelihood structure as standard ListMLE~\cite{xia2008listwise}, our CausalListMLE derives its novelty from a fundamental difference in the input and the resulting optimization objective. Standard ListMLE computes ranking scores $f(\bm{X})$ using all input features, often leading to models that overfit spurious correlations. In contrast, CausalListMLE is fundamentally designed to restrict the input of the scoring function to the causal feature subset $\text{MB}(\bm{Y})$ (the Markov Blanket of the node influence). This restriction enables the model to co-optimize the alignment between the causal embeddings and the ranking order, thereby  
yielding invariant and importance-related features, which are critical for achieving robust cross-network generalization.}



\subsection{{The objective function}}
We have designed the causal reconstruction loss $\mathcal{L}_1$ to ensure the learned representations accurately \change{capture} the causal dependencies underlying node importance, and the causal ranking loss $\mathcal{L}_2$ to optimize them for the downstream ranking task.
In addition, to mitigate model overfitting and prevent performance degradation, we design a regularization term $\mathcal{L}_3$ on the weight parameters to the objective function $\mathcal{L}$:
\begin{align}\label{l3}    
	\mathcal{L}_3 = \sum_{\substack{i=1\\i\not=m}}^l (||\bm{w}_1^{(i)}||^2)  + \sum_{\substack{i=1}}^{m-1} (||\bm{w}_2^{(i)}||^2)+\sum_{\substack{i=1}}^{t+1} (||\bm{w}_3^{(i)}||^2). 
\end{align}

\change{Remark: The term $\mathcal{L}_3$ corresponds to the standard $L^2$ (weight-decay) regularization applied to model weights. This penalty constrains the magnitude of the weight parameters by adding their squared $\ell_2$ norm to the loss, which effectively reduces overfitting by preventing excessively large weights. Since each bias parameter controls only a single variable, not regularizing it does not lead to significant variance. Furthermore, 
regularizing bias parameters can result in noticeable underfitting. Therefore, in neural networks, regularization is typically applied only to weight parameters, not bias parameters \cite{goodfellow2016deep}. Also, in this paper, the regularization term $\mathcal{L}_3$ does not impose constraints on bias terms.}

\ptitle{Feature–task co-optimization mechanism.} {According to the previous defined losses in Eqs.~\eqref{l1}, \eqref{l2} and \eqref{l3}}, the objective function of ICAN can be defined as
\begin{align}\label{objectiveall}          
	&\mathcal{L} = - \frac{\lambda_1}{N}\sum_{i,j} \left[ \bm{A}_{ij} \log \hat{\bm{A}}_{ij} + (1 - \bm{A}_{ij}) \log (1 - \hat{\bm{A}}_{ij}) \right] \nonumber \\&  -\lambda_2\log P(\pi_y \mid f_2(f_1(\text{MB}(\bm{Y})))) \nonumber \\& +\lambda_3\left[\sum_{\substack{i=1\\i\not=m}}^l (||\bm{w}_1^{(i)}||^2)  + \sum_{\substack{i=1}}^{m-1} (||\bm{w}_2^{(i)}||^2)+\sum_{\substack{i=1}}^{t+1} (||\bm{w}_3^{(i)}||^2)\right]. 
\end{align}
{To ensure that the learned $\bm{W}$ forms a DAG, this objective must satisfy the acyclicity constraint~\cite{zheng2018dags}: }
\begin{equation}\label{eq:acyclicity_constraint}
	\mathrm{tr}\!\left(e^{\bm{W}  \odot \bm{W} }\right) - (p + 1) = 0,
\end{equation}
where \( \mathrm{tr}(\cdot) \) denotes the trace operator, \( e^{\bm{W}} \) represents the matrix exponential, and \( \odot \) denotes the Hadamard product.
\begin{lemma}
	Let \( \mathbb{G} = (U, E) \) be a graph with adjacency matrix \( \bm{W}  \in \mathbb{R}^{(p+1)\times(p+1)} \).  
	Then, the \((i,j)\)-th element of \( \bm{W} ^k \), denoted by \( \bm{W} _{ij}^{(k)} \), represents the number of paths of length \( k \) from node \( v_i \) to node \( v_j \), where \( i,j = 1,\ldots,p+1 \).
\end{lemma}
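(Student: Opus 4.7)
The plan is to prove the lemma by induction on the path length $k$, which is the standard argument for powers of an adjacency matrix. I will first carefully note the convention: here ``path of length $k$'' is interpreted in the walk sense (vertex repetitions allowed), since that is the quantity actually counted by matrix powers; under the strict path convention the identity fails for $k\ge 2$ in graphs with cycles.

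For the base case $k=1$, the entry $\bm{W}_{ij}^{(1)} = \bm{W}_{ij}$ is by definition the indicator of an edge from $v_i$ to $v_j$, which coincides with the number of length-$1$ paths between them. For the inductive step, I will assume that for some $k\ge 1$ the entry $\bm{W}_{ij}^{(k)}$ equals the number of length-$k$ paths from $v_i$ to $v_j$, and then expand
\begin{equation*}
\bm{W}_{ij}^{(k+1)} = \sum_{l=1}^{p+1} \bm{W}_{il}^{(k)} \, \bm{W}_{lj}
\end{equation*}
using the definition of matrix multiplication. The key observation is a bijection between length-$(k+1)$ paths from $v_i$ to $v_j$ and pairs consisting of a length-$k$ path from $v_i$ to some intermediate vertex $v_l$ together with an edge from $v_l$ to $v_j$; under the inductive hypothesis the summand $\bm{W}_{il}^{(k)}\bm{W}_{lj}$ counts exactly the pairs passing through $v_l$, and summing over $l$ yields the total count.

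There is no genuine technical obstacle: the argument is a routine induction together with a counting bijection. The only subtlety worth flagging in the write-up is the path-versus-walk convention mentioned above, so that the lemma as used later (in justifying the acyclicity constraint $\mathrm{tr}(e^{\bm{W}\odot\bm{W}}) - (p+1) = 0$) is consistent: a DAG contains no closed walks of any positive length, so $\mathrm{tr}(\bm{W}^k) = 0$ for all $k\ge 1$ under this interpretation, which is precisely what is needed to derive the exponential-trace characterization of acyclicity.
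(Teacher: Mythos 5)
Your induction argument is correct and is the standard proof of this fact; the paper itself states the lemma without proof (treating it as a classical result used only to justify the trace-exponential acyclicity characterization), so there is no authorial proof to diverge from. Your explicit flag of the path-versus-walk convention is well taken: what $\bm{W}^k$ counts is walks, and it is precisely the absence of closed walks of every positive length that makes $\mathrm{tr}(\bm{Q}^k)=0$ for all $k\ge 1$ in the theorem's proof, so your reading is the one the paper actually needs. One further caveat worth adding in a careful write-up: the lemma is stated for a general real matrix $\bm{W}\in\mathbb{R}^{(p+1)\times(p+1)}$, and the literal ``number of walks'' interpretation only holds when the entries are $0$/$1$; for the weighted matrices arising here (and for $\bm{Q}=\bm{W}\odot\bm{W}$, which is nonnegative but not binary) the statement should be read on the $0$/$1$ support, i.e., $\bm{W}^{(k)}_{ij}$ is a sum of products of edge weights over length-$k$ walks, which is nonzero (indeed positive, in the nonnegative case) exactly when such a walk exists --- and that weaker form is all the acyclicity argument uses.
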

\begin{theorem}
The adjacency matrix \( \change{\bm{W}} \) corresponds to an acyclic graph \( \mathbb{G} \) if and only if Eq.~\eqref{eq:acyclicity_constraint} holds. 
\end{theorem}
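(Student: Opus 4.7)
The plan is to argue this through the power-series expansion of the matrix exponential, using the preceding lemma applied to the nonnegative matrix $\bm{B} := \bm{W}\odot\bm{W}$. The motivation for squaring elementwise is that the entries of $\bm{W}$ may be negative (weights), but $\bm{B}_{ij}\geq 0$ with $\bm{B}_{ij}>0$ if and only if $\bm{W}_{ij}\neq 0$, i.e.\ if and only if there is an edge from $v_i$ to $v_j$ in $\mathbb{G}$. Thus $\bm{B}$ is an edge-preserving, nonnegative surrogate of $\bm{W}$ and can play the role of the adjacency matrix in the lemma.

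First I would rewrite the constraint using the convergent series
\begin{equation}
\mathrm{tr}\!\left(e^{\bm{B}}\right) \;=\; \sum_{k=0}^{\infty}\frac{\mathrm{tr}(\bm{B}^{k})}{k!} \;=\; (p+1) \;+\; \sum_{k=1}^{\infty}\frac{\mathrm{tr}(\bm{B}^{k})}{k!},
\end{equation}
since $\bm{B}^{0}=\bm{I}_{p+1}$ contributes exactly $p+1$ to the trace. Hence Eq.~\eqref{eq:acyclicity_constraint} is equivalent to $\sum_{k=1}^{\infty}\mathrm{tr}(\bm{B}^{k})/k!=0$.

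Next I would invoke the lemma: $(\bm{B}^{k})_{ii}$ equals the (weighted) number of length-$k$ walks from $v_i$ back to $v_i$, and since every summand in the usual walk-counting expansion $(\bm{B}^k)_{ii}=\sum_{i_1,\dots,i_{k-1}}\bm{B}_{i\,i_1}\bm{B}_{i_1 i_2}\cdots\bm{B}_{i_{k-1}\,i}$ is nonnegative, we have $(\bm{B}^{k})_{ii}\geq 0$ with strict inequality exactly when there exists a closed walk of length $k$ based at $v_i$. Summing over $i$ gives $\mathrm{tr}(\bm{B}^{k})\geq 0$, with $\mathrm{tr}(\bm{B}^{k})>0$ iff $\mathbb{G}$ contains at least one closed walk of length $k$.

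Combining these observations finishes both directions. For the forward direction, if $\mathbb{G}$ is acyclic then no closed walk of any positive length exists, so $\mathrm{tr}(\bm{B}^{k})=0$ for every $k\geq 1$ and the series vanishes. For the converse, since every term $\mathrm{tr}(\bm{B}^{k})/k!$ is nonnegative, a zero sum forces $\mathrm{tr}(\bm{B}^{k})=0$ for all $k\geq 1$; in particular no node lies on a closed walk, so $\mathbb{G}$ is acyclic. The most delicate point I expect is justifying the equivalence between ``closed walks in $\mathbb{G}$'' and ``cycles in $\mathbb{G}$'': any directed cycle yields a closed walk, and conversely the shortest closed walk through a vertex must be a simple cycle, so the two notions coincide for the existence question; this is where the lemma is indispensable and where I would be most careful in the write-up.
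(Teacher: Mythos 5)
Your proposal is correct and follows essentially the same route as the paper's own proof: form the nonnegative matrix $\bm{W}\odot\bm{W}$, expand $\mathrm{tr}(e^{\bm{W}\odot\bm{W}})-(p+1)$ as the series $\sum_{k\ge 1}\mathrm{tr}\bigl((\bm{W}\odot\bm{W})^{k}\bigr)/k!$, use nonnegativity to force every trace term to vanish, and invoke the lemma on powers of the adjacency matrix to equate vanishing diagonal entries with acyclicity. Your added care about the closed-walk versus simple-cycle distinction is a sound refinement of the same argument rather than a different approach.
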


\begin{proof}
	Let \( \bm{Q} = \bm{W} \odot \bm{W} \). Clearly, \( \bm{Q} \ge 0 \), \emph{i.e.}, 
	$$\bm{Q} \in \mathbb{R}^{(p+1)\times(p+1)}_+.$$ 
	
    a) (Sufficiency):
	If Eq.~\eqref{eq:acyclicity_constraint} holds, then
	\begin{align}\label{eq:trace_expand}
		\mathrm{tr}(e^{\bm{Q}}) - (p+1)
		&= \mathrm{tr}(\bm{I}) - (p+1)
		+ \mathrm{tr}(\bm{Q})
		+ \frac{1}{2!}\mathrm{tr}(\bm{Q}^2)
		+ \cdots \nonumber \\
		&= \mathrm{tr}(\bm{Q})
		+ \frac{1}{2!}\mathrm{tr}(\bm{Q}^2)
		+ \cdots = 0.
	\end{align}
	Since all entries of \( \bm{Q} \) are nonnegative, it follows that
	\(\mathrm{tr}(\bm{Q}^k) = 0\) for all \( k \ge 1 \), implying \( q_{ii}^{(k)} = 0 \) for all \( i \).  
	By Lemma~1, there are no cycles in \( \mathbb{G} \); hence, \( \mathbb{G} \) is acyclic.
	
    b) (Necessity):
	Conversely, if \( \mathbb{G} \) is acyclic, Lemma~1 implies \( q_{ii}^{(k)} = 0 \) for all \( i \) and \( k \).  
	Thus, \( \mathrm{tr}(\bm{Q}^k) = 0 \), and therefore
	\begin{align}
		{0 = \mathrm{tr}(\bm{Q})
		+ \frac{1}{2!}\mathrm{tr}(\bm{Q}^2)
		+ \cdots = \mathrm{tr}(e^{\bm{Q}}) - (p+1).}
	\end{align}
	Hence, Eq.~\eqref{eq:acyclicity_constraint} holds, proving the necessity.
\end{proof}

To solve above equality-constrained problem, the augmented Lagrangian method~\cite{ng2019graph} is employed to convert it into an unconstrained problem. After introducing the Lagrange multiplier and the penalty term, the objective function in Eq.~\eqref{objectiveall} is reformulated as the following augmented Lagrangian function:
\begin{align}\label{objectiveall1}          
	&\mathcal{L} = - \frac{\lambda_1}{N}\sum_{i,j} \left[ \bm{A}_{ij} \log \hat{\bm{A}}_{ij} + (1 - \bm{A}_{ij}) \log (1 - \hat{\bm{A}}_{ij}) \right] \nonumber \\&  -\lambda_2\log P(\pi_y \mid f_2(f_1(\text{MB}(\bm{Y})))) \nonumber \\& +\lambda_3\left[\sum_{\substack{i=1\\i\not=m}}^l (||\bm{w}_1^{(i)}||^2)  + \sum_{\substack{i=1}}^{m-1} (||\bm{w}_2^{(i)}||^2)+\sum_{\substack{i=1}}^{t+1} (||\bm{w}_3^{(i)}||^2)\right] \nonumber \\&+\alpha h(\bm{W}) + \frac{\rho}{2}|h(\bm{W})|^2],
\end{align}
where $\alpha$ denotes the Lagrange multiplier, $\rho>0$ denotes the penalty parameter and $h(\bm{W})=\text{tr}(e^{\bm{W}\odot \bm{W}}) - (p+1)$. 
We then have the following update rules for the adjacency matrix $\bm{W}$, $\alpha$ and $\rho$:
\begin{align}
	\bm{W}^{(t+1)} &= \arg \min~\mathcal{L}(\bm{W}^{(t)},\alpha^{(t)}, \rho^{(t)}), \label{rules_1}\\
	\alpha^{(t+1)} &= \alpha^{(t)} + \rho^{(t)} h(\bm{W}^{(t+1)}), \label{rules_2}\\ 
	\rho^{(t+1)}&=\begin{cases} \beta \rho^{(t)}, & \mbox{if   } |h(\bm{W}^{(t+1)})| \le \theta |h(\bm{W}^{(t)})|, 
		\\ \rho^{(t)},  & \text{otherwise}, \end{cases}\label{rules_3}
\end{align}
where $t$ denotes the $t_{th}$ iteration; $\beta >1$ and $\theta <1$ are two tuning hyperparameters. The gradient descent method is applied to solve the problem Eq.~\eqref{rules_1}, thereby obtaining the adjacency matrix $\bm{W}$. Following this, the Lagrange multiplier $\alpha$ and the penalty parameter $\rho$ are updated accordingly. The proposed ICAN algorithm is shown in Algorithm~\ref{code}.

\begin{algorithm}[!ht]      
	\renewcommand{\algorithmicrequire}{\textbf{Input:}} 	      
	\renewcommand{\algorithmicensure}{\textbf{Output:}} 	      
	\caption{ICAN} \label{code}        
	\label{power}          
	\begin{algorithmic}[1]                          
		\REQUIRE  Adjacency matrix $\bm{A}$; parameters $\lambda_1$, $\lambda_2$, $\lambda_3$, $l$, $m$, $t$, $d$, $\beta$, $\theta$, $\delta$, $\gamma$, $T$;
		\ENSURE Prediction ranking $\hat{\bm{Y}}$;
		\STATE Initialization:  
		Initial weight and bias parameters randomly; Let $\bm{W} = \mathrm{\bm{I}}, \rho=1, \alpha=0, t=0$.
        \STATE {Acquire node feature matrix $\bm{X}$ using node embedding methods;}
		\WHILE {$t\le T$} 
		\STATE Update the adjacency matrix $\bm{W}^{(t+1)}$ by Eq.~\eqref{rules_1};
		\STATE Update $\alpha^{(t+1)}$ and $\rho^{(t+1)}$ by Eq.~\eqref{rules_2} and Eq.~\eqref{rules_3}, respectively;
		\STATE $t \gets t + 1$;
		\ENDWHILE
		\STATE Extract the causal representation $\text{MB}(\bm{Y})$ by $\bm{W}$ and $\bm{H}^{(m)}$. 
		\STATE Calculate the output of ranking prediction module $\hat{\bm{Y}}$.
		\RETURN $\hat{\bm{Y}}$
	\end{algorithmic}      
\end{algorithm}

\subsection{Computational complexity}
The computational complexity of ICAN primarily stems from its causal node embedding module and causal ranking prediction module. For the autoencoder in the causal node embedding module, the complexity of encoder is $O(|E|d+nd^2)$ and the complexity of decoder is $O(n^2d)$, resulting in an overall complexity for the Autoencoder of $O(|E|d+nd^2+n^2d)$, where $n$ and $d$ denote the number of nodes and feature dimensions of the input $\bm{X}$. $\lvert E \rvert$ is the number of edges. As for the causal ranking prediction module, the complexity of the GNN layer is $O(|E|d+nd^2)$ and the complexity of the FC layer is $O(nd)$. Therefore, the total complexity of the causal ranking prediction module is $O(|E|d+nd^2)$. Combining both components, the overall computational complexity of the proposed model is $O(|E|d+nd^2+n^2d)$.

\section{Experiment}\label{exp}

In this section, experiments are performed on the real-world networks to validate the effectiveness of the proposed ICAN method in the node importance ranking problem.
In the following, the experimental settings such as the used datasets, comparison methods, evaluation criterion and model parameters for the experiments are described first, followed by a detailed analysis of the results.
To systematically evaluate ICAN, we define three research questions (RQs)—"Three CANs Test for ICAN":
\begin{itemize}
	\item RQ1: Can ICAN trained without reliance on the target network learn causally relevant embeddings of node importance and generalize effectively across diverse real-world network structures?
	\item RQ2: Can the causal node embedding mechanism, the feature–task co-optimization mechanism, and the causal ranking loss function in ICAN improve the cross-network ranking performance?
    \item RQ3: Can ICAN achieve comparable performance when trained on different types of generative networks, indicating robustness to the network generation process?
\end{itemize}

\subsection{Description of datasets}

We train the model on five synthetic graphs and evaluate it on \change{eight} real-world networks to validate its effectiveness. 

{\ptitle{Training datasets.}} Five representative synthetic network models are used for training, including Barabási–Albert (BA)~\cite{barabasi2009scale}, Extreme Homogeneous (EH)~\cite{lou2020towards}, Erdös-Rényi (ER) random-graph~\cite{erdHos1961strength}, Q-Snapback (QS)~\cite{lou2018toward} and Random Hexagon (RH)~\cite{lou2018toward}. The BA networks are generated using the preferential attachment mechanism, where new nodes are sequentially added and connected to existing nodes with a probability proportional to their current degrees, leading to a scale-free degree distribution. The EH networks are obtained by performing additional random edge rectifications on ER networks to transform their degree distribution from Poisson to near-uniform. The ER random graphs are generated by connecting each pair of nodes independently with a fixed probability $p$, producing networks with a binomial degree distribution. The QS networks consist of a directed backbone chain with multiple probabilistic "snapback" edges linking newly added nodes to previously added ones within a defined range, thereby enhancing backward connectivity and robustness. The RH networks are composed of randomly connected hexagonal substructures that capture local clustering and spatial organization similar to lattice-like systems. All trained networks comprise 1,000 nodes with an average degree of 4. 

{\ptitle{Target datasets.}} {We use the following real-world networks to test the model trained on the synthetic networks:}
\begin{itemize}
	\item Karate~\cite{zachary1977information}: A human social network, in which each node represents a member in the club and each link shows the relationship between two members.
	\item Jazz~\cite{gleiser2003community}: A human collaboration network, in which each node represents a jazz musician and each link indicates that two musicians have performed together in the same band.
	\item Email-univ~\cite{guimera2003self}: An email communication network from the University Rovira i Virgili in Spain, where nodes represent users and edges indicate at least one email was exchanged.
	\item USAir~\cite{kunegis2013konect}: A directed weighted network represents the flight connections among 1,574 U.S. airports in 2010. Each node corresponds to an airport, and each directed edge indicates the existence of at least one flight from the origin to the destination airport. The edge weight quantifies the total number of flights operated on that route during the year.
	\item Vidal~\cite{rual2005towards}: A network represents an initial version of a proteomescale map of Human binary protein–protein interactions.
	\item Email-dnc~\cite{rossi2015network}: A directed unweighted network derived from the 2016 Democratic National Committee (DNC) email leak. Each node represents an individual user, and each directed edge indicates that at least one email was sent from one user to another.
    \item \change{Figeys~\cite{ewing2007large}: A large-scale human protein–protein interaction network derived from mass spectrometry–based experimental assays, where nodes represent human proteins and edges denote experimentally detected physical interactions between them.}
    \item \change{Oz~\cite{freeman1998exploring}: A friendship network captures interpersonal relationships among 217 residents of a university residence hall, where each node represents an individual and each edge denotes a reported friendship connection between two residents. }
\end{itemize}

In our experimental setup, the model is trained exclusively on synthetic networks and evaluated on real-world networks. This setting requires the model to generalize across graphs with different 
numbers of nodes. To achieve this, we keep the second dimension of the feature matrix $\bm{X}$ (\emph{i.e.,} the node feature dimension) the same for both the training and test graphs. In this way, 
the input dimension of the GCN weight matrix $\bm{w}$ in the node-wise update function
$
\mathrm{ReLU}(\widetilde{\bm{A}}\bm{X}\bm{w} + \bm{b})
$
remains unchanged between training and testing. The fixed dimensionality of $\bm{w}$, together with parameter sharing across all nodes, allows the same set of parameters to be directly applied to graphs of different sizes.
These design choices endow our model with strong inductive generalization capabilities, enabling robust performance in cross-network evaluation scenarios.

\subsection{Comparative methods}
To demonstrate the advantages of the ICAN method in node ranking, we compare it against a range of state-of-the-art baselines widely used for measuring node importance in complex networks. Specifically, the comparison includes five traditional topology-based methods—encompassing neighborhood-based, eigenvector-based, and path-based techniques—along with six recently developed deep learning-based approaches. A brief summary of these methods is provided below.


\begin{itemize}
	\item Degree Centrality (DC)~\cite{bonacich1972factoring}. It quantifies the influence of a node based on the number of direct connections it has to other nodes in the graph. 
    The degree centrality of a node $u$ is formally defined as
	\[
	\mathrm{DC}(u) = \frac{d(u)}{n - 1},
	\]
	where $d(u)$ is the degree of node $u$ (i.e., the number of edges connected to $u$), and $n$ is the total number of nodes in the network.
	
	\item Betweenness Centrality (BC)~\cite{freeman1977set}. It measures the importance of a node by quantifying how frequently it appears on the shortest paths between all pairs of nodes in the network. 
	\[
	\mathrm{BC}(u) = \sum_{s \neq t \neq u} \frac{g_{st}^{(u)}}{g_{st}},
	\]
	where $g_{st}$ denotes the number of shortest paths from node $s$ to node $t$, and $g_{st}^{(u)}$ is the number of those paths that pass through node $u$.
	
	\item Eigenvector Centrality (EC)~\cite{bonacich1987power}. It assigns each node a centrality score based not only on its own connectivity but also on the importance of the nodes it is connected to. 
    The EC score $\bm{X}_u$ of node $u$ satisfies
	\[
	\bm{X}_u = \frac{1}{\lambda} \sum_{k} \bm{A}_{uk} \bm{X}_k,
	\]
	which, in matrix form, becomes
	\[
	\lambda \bm{X} = \bm{A} \bm{X},
	\]
	where $\bm{A}$ is the adjacency matrix of the network, $\bm{X}$ is the eigenvector of centrality scores, and $\lambda$ is the largest eigenvalue of $\bm{A}$. 
	
	\item $H$-index (HI)~\cite{hirsch2005index}. 
    The $H$-index of a node considers the degrees of its neighbors and reflects how many of them are themselves influential. For a node $u$ with neighbors $j_1, j_2, \ldots, j_k$, the $H$-index is defined as
	\[
	\mathrm{HI}(u) = H\left( k_{j_1}, k_{j_2}, \ldots, k_{j_k} \right),
	\]
	where $k_{j_i}$ is the degree of neighbor $j_i$, and $H(\cdot)$ is a function that returns the largest integer $x$ such that at least $x$ neighbors of node $u$ have degrees no less than $x$.
    \item $K$-Shell (KS)~\cite{kitsak2010identification}. The $K$-shell algorithm operates through an iterative pruning process.   
    It iteratively peels away network layers to assign each node a {$k$}-value indicating its coreness based on residual connections.
    A higher $K$-shell value indicates greater node influence.
    \item GNN-Bet~\cite{maurya2021graph}. This method enables node ranking by employing a dual-path aggregation mechanism that separately propagates features along incoming and outgoing shortest paths using row-wise modified adjacency matrices, with final ranking scores generated through multiplicative combination of the aggregated path representations.
    \item GNN-Close~\cite{maurya2021graph}. This approach achieves node ranking through a hierarchical feature aggregation scheme that combines normal adjacency operations in the initial layer with column-wise modified matrices in subsequent layers to constrain feature propagation along shortest paths, producing final ranking scores via summation of layer-wise outputs.
	\item RCNN~\cite{yu2020identifying}. This method leverages convolutional neural networks to extract features from node-centric subgraph structures and ranks node importance based on learned representations.
	\item CGNN~\cite{zhang2022new}. This method enhances RCNN by incorporating additional GNN layers, improving the capture of topological dependencies for more accurate vital node identification.
	\item AGNN~\cite{xiong2024vital}. This method combines a GCN-based autoencoder with a GNN ranking head to jointly learn structural embeddings and predict node importance using a listwise ranking loss.
    \item STGCN~\cite{huang2025sparse} \footnote{Since AGNN and STGCN employ the adjacency matrix as the input to their autoencoder and do not explicitly address the issue of inconsistent numbers of nodes between training and testing networks, 
we instead incorporate the generated node features by the adjacency matrix as the input to the AGNN and STGCN models.}. This method leverages a hybrid architecture combining Sparse Transformer layers for global dependency modeling and GCNs for local topological features to rank key nodes in complex networks.
\end{itemize}

\subsection{Evaluation criterion}
In this section, we introduce the evaluation metric used to measure the effectiveness of the predicted ranking results  
by ICAN, Kendall’s $\tau$ coefficient~\cite{knight1966computer}. This non-parametric statistic is widely employed to 
measure the ordinal association between two ranking lists. It quantifies the similarity between two ranking orders 
by comparing the numbers of concordant and discordant pairs. \change{The definition of Kendall’s $\tau$ is given by}
\[
{\tau = \frac{2(N_c - N_d)}{n(n-1)}},
\]
where $n$ is the number of items (or nodes), $N_c$ is the number of concordant pairs, $N_d$ is the number of discordant pairs, and $\tfrac{1}{2}n(n-1)$ is the total number of possible item pairs. For two rankings $\{(x_i, y_i)\}$ and $\{(x_j, y_j)\}$, a pair $(i,j)$ is considered concordant if $(x_i - x_j)(y_i - y_j) > 0$, discordant if $(x_i - x_j)(y_i - y_j) < 0$, and neither if $x_i = x_j$ or $y_i = y_j$. \change{The value of Kendall’s $\tau$ ranges from $-1$ to $1$}: a value of $1$ indicates complete agreement between rankings, $-1$ indicates complete disagreement, and $0$ indicates no correlation. This metric is particularly useful for evaluating graph-based ranking models, as it reflects how well the predicted node ranking preserves the ground-truth ordering.

\subsection{Implementation details}
The parameter settings for ICAN and the comparative methods are \change{summarized} as follows. 

For the ICAN method, 
set the number of hidden layers to be $l=5$ and each hidden layer to have \change{$p=32$} neurons.  
Let  $$d=128, m = 3, t = 2, \beta =  10, \theta =  0.25.$$ The learning rate is \change{set to} $\mu=0.001$ and the maximum number of iterations is $T=10$. 
In addition, we set the recovery probability $\delta = 1$ and the infection probability $\gamma = 1.5 \times \gamma_c$ to ensure effective spreading within the network. The infection threshold $\gamma_c$ is derived from mean-field theory~\cite{moreno2002epidemic}  as
\begin{equation}
	\gamma_c = \frac{\langle k \rangle}{\langle k^2 \rangle - \langle k \rangle},
\end{equation}
where $k$ denotes the node degree and $\langle \cdot \rangle$ indicates the average over all nodes. 

For other comparison methods, all the parameters are set to the default. For model evaluation, Kendall’s $\tau$ coefficient on the test networks is adopted as the primary evaluation metric. To reduce the impact of randomness on experimental results, both ICAN and the comparative methods are run 5 times each, with average values recorded. All experiments are conducted on a computer running Windows 10, equipped with an Intel(R) i5-10400 CPU at 2.90 GHz and 16 GB of memory.

\subsection{{{Results on various real-world networks (RQ1)}}} \label{RQ1}
In this subsection, we present a detailed analysis of the experimental results across various datasets, highlighting performance outcomes and key insights from each.

\subsubsection{Training on BA-generated network and testing on various real-world networks}

In the first experiment, we conduct experiments by training on the BA generated network and testing on different real-world networks to validate the model's effectiveness. Table~\ref{BA training} presents the predicted Kendall's correlation coefficients across different datasets. The experimental results comprehensively evaluate the proposed ICAN model against several baseline methods for node ranking across \change{eight} real-world networks. \change{As shown in the table}, ICAN demonstrates superior and consistent performance. Notably, ICAN achieves the highest Kendall's coefficient on every individual dataset.
This culminates in the best overall average Kendall's coefficient of 0.7707, securing the top rank among all compared methods. {The fact that ICAN, trained on a BA synthetic network, generalizes effectively to diverse 
real-world networks 
highlights its robust cross-network predictive capability. This can be attributed to the ICAN learning network-invariant 
node embeddings that transfer reliably across different networks.} 

\change{Overall, in terms of average Kendall's coefficient , ICAN ranks first in all datasets. 
From the experimental results above, we can summarize the following conclusions:}
\begin{itemize}
    \item \change{Compared to traditional topology-based methods, ICAN significantly outperforms heuristic baselines such as DC, BC and KS across all datasets. This demonstrates the limitations of methods that rely on single structural dimensions. In contrast, ICAN effectively captures complex, non-linear dependencies and higher-order topological information that traditional centrality measures fail to identify, proving the necessity of deep representation learning for node ranking.}
    \item \change{Compared to deep learning-based methods, ICAN consistently outperforms state-of-the-art baselines, including RCNN, CGNN, AGNN, and STGCN. While frameworks like AGNN and STGCN also utilize autoencoders, they typically decouple representation
learning from the ranking objective and lack explicit causal constraints. The superior performance of ICAN demonstrates that its influence-aware causal node embedding mechanism ensures the learned representations are causally linked to node importance, thereby enhancing robustness and generalization across diverse networks. Furthermore, by jointly minimizing causal reconstruction and ranking losses, ICAN adopts a unified optimization strategy that fosters mutual reinforcement between feature extraction and ranking prediction. This alignment ensures that the network-invariant causal representations are directly tailored for the downstream ranking task, leading to superior cross-network transferability. }
\end{itemize}}

\begin{table*}[htbp]
	\caption{Kendall's coefficient of various methods (trained on BA network) on different real-world test networks. }
	\label{BA training}
	\centering
		\begin{tabular}{ccccccccccc}
			\toprule
				Methods  & Karate & Jazz & Email-univ & USAir &Vidal &Email-dnc &\change{Figeys} &\change{Oz} & Average &Rank \\
			\midrule
			DC &0.6749 &0.7903  &0.6236   &0.6119 &0.4648 &0.5613 &\change{0.5950} &\change{0.7587} &0.6351 &6 \\
			BC &0.5685&0.4643&0.2374&0.4407&0.5876&0.4032 &\change{0.5321} &\change{0.5060} &0.4675   &12  \\
			EC &0.8069&0.8212&0.2534&0.6554&0.2452&0.5503 &\change{0.6546} &\change{0.7684}   &0.5944 &7  \\
			HI &0.6654&0.8271&0.6854&0.6284&0.7568&0.5672 &\change{0.6245}&\change{0.6299}  &0.6946   &4   \\
            KS &0.6421 &0.7713&0.6828&0.6369&0.7619&0.4988 &\change{0.6299}&\change{0.7069} &0.6663   &5 \\
            GNN-Bet&0.6530&0.6789&0.3645&0.5350&0.2866&0.3662 &\change{0.5620} &\change{0.4902} &0.4921  &10\\
           GNN-Close&0.5352&0.6551&0.3629&0.5818&0.2429&0.4442 &\change{0.5453} &\change{0.3832} &0.4688  &11\\
			RCNN &0.7528&0.8455&0.7522&0.6709&0.9035&0.5432 &\change{0.6025} &\change{0.7914} &0.7328  &2  \\
			CGNN &0.7058&0.7911&0.6488&0.5835&0.9442&0.5649 &\change{0.5959}&\change{0.7322} &0.6958  &3 \\
			AGNN &0.6505&0.7584&0.3100&0.6429&0.4108&0.4946 &\change{0.5989}&\change{0.2744} &0.5176  &9  \\
            STGCN &0.5640&0.7000&0.7327&0.3972&0.5784&0.4248 &\change{0.5975}&\change{0.6504} &0.5806 &8  \\
			ICAN  &\underline{0.8090} &\underline{0.8504} &\underline{0.7625}&\underline{0.6758}&\underline{0.9524}&\underline{0.5740}   &\change{\underline{0.6708}}&\change{\underline{0.8051}}&\underline{0.7625}&\underline{1} \\	
			\bottomrule
	\end{tabular}
\end{table*}

\subsubsection{Training on ER-generated network and testing on various real-world networks} 
Next, we conduct experiments by training on the ER generated network and testing on different real-world networks. Table~\ref{ER training} presents the Kendall’s $\tau$ coefficients of various methods evaluated across \change{eight} real-world networks. As shown, ICAN attains the highest Kendall’s $\tau$ values on \change{four} out of \change{eight} datasets, with an average coefficient of 0.7732, ranking first overall. This clearly demonstrates its strong generalization ability when trained on the ER-generated network and tested on structurally diverse real-world networks.

Nevertheless, ICAN exhibits relatively lower performance on certain networks, such as USAir and Email-dnc. This performance gap may be largely attributed to the distributional discrepancy between ER training network and the target real-world networks, particularly in terms of degree heterogeneity and community structure. The ER network is characterized by a nearly uniform degree distribution, while many real-world networks display heavy-tailed or scale-free properties, resulting in markedly different local connectivity patterns. Consequently, the model—though effective in capturing causal dependencies among node features—may face challenges in fully adapting to unseen structural regimes. This also motivates our future work to investigate strategies for selecting or generating training networks that better align with the topological properties of specific application domains.


\begin{table*}[htbp]
	\caption{Kendall's coefficient of various methods (trained on ER network) on different real-world test networks.}
	\label{ER training}
	\centering
		\begin{tabular}{ccccccccccc}
			\toprule
		Methods  & Karate & Jazz & Email-univ & USAir &Vidal &Email-dnc &\change{Figeys} &\change{Oz} & Average &Rank \\
		\midrule
		DC &0.6749 &0.7903  &0.6236   &0.6119 &0.4648 &0.5613 &\change{0.5950}&\change{0.7587} &0.6351&6 \\
		BC &0.5685&0.4643&0.2374&0.4407&0.5876&0.4032 &\change{0.5321}&\change{0.5060}&0.4675&11  \\
		EC &0.8069&0.8212&0.2534&0.6554&0.2452&0.5503 &\change{\underline{0.6546}}&\change{0.7684} &0.5944&7  \\
		HI &0.6654&0.8271&0.6854&0.6284&0.7568&\underline{0.5672} &\change{0.6245}&\change{\underline{0.8017}} &0.6946&3   \\
        KS &0.6421 &0.7713&0.6828&0.6369&0.7619&0.4988 &\change{0.6299}&\change{0.7069} &0.6663& 5\\

        GNN-Bet  &0.6137&0.6394&0.3744&0.5565&0.4125&0.4566 &\change{0.5365}&\change{0.5691} &0.5198&10\\
        GNN-Close&0.6887&0.2779&0.4550&0.5470&0.4079&0.4122 &\change{0.5343}&\change{0.1673} &0.4363&12\\
		RCNN &0.7094&0.8014&0.8141&0.6823&0.9119&0.5490 &\change{0.5576}&\change{0.7507} &0.7221&2 \\
		CGNN &0.6749&0.6214&0.6323&0.5835&0.9442&0.5649 &\change{0.5959}&\change{0.7322} &0.6687&4 \\
		AGNN &0.7622&0.4889&0.2950&\underline{0.6829}&0.4008&0.4534 &\change{0.5091}&\change{0.6549} &0.5309&9  \\
        STGCN &0.5568 &0.6525&0.7141&0.3952&0.5792&0.3735 &\change{0.3833}&\change{0.6751} &0.5412 &8\\
		ICAN  &\underline{0.8092} &\underline{0.8460} &\underline{0.8228}&0.6545&\underline{0.9524}&0.5540 &\change{0.6312}&\change{0.7820}&\underline{0.7565}&\underline{1} \\
			\bottomrule
	\end{tabular}
\end{table*}

\subsection{Ablation study (RQ2)}

\begin{figure*}[htbp]
	\centering
	\subfloat[
    \change{Ablation for causal mechanism.}]{\includegraphics[width=.32\textwidth]{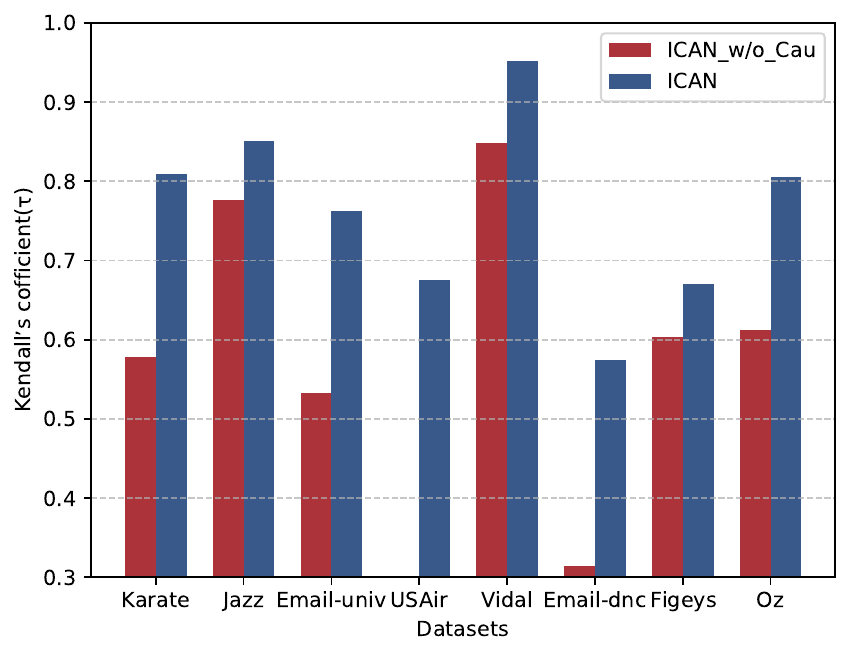}\label{wocausal} }
	\subfloat[
    \change{Ablation for co-optimization mechanism.}]{\includegraphics[width=.32\textwidth]{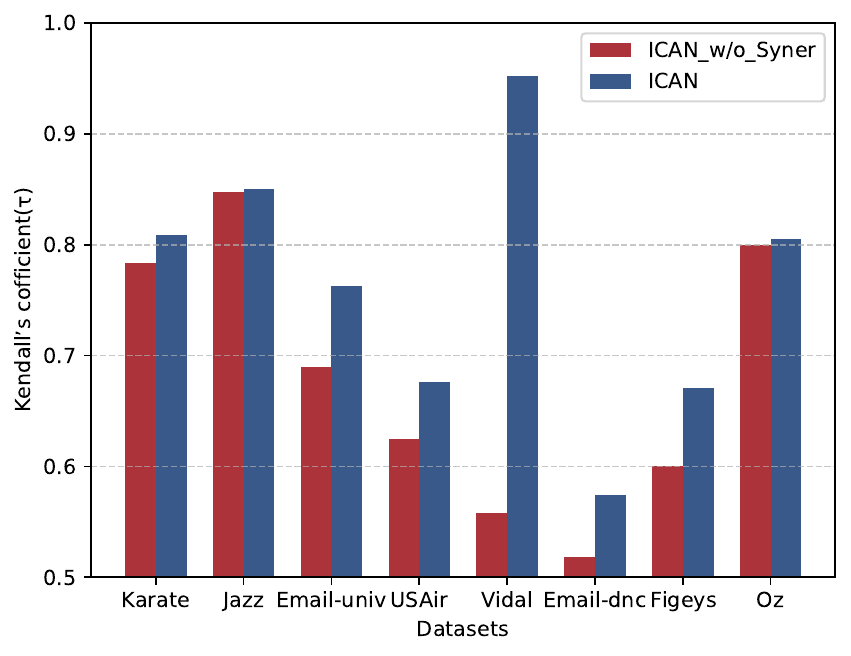} \label{Two-stage}}
    \subfloat[
    \change{Ablation for causal ranking loss.}]{\includegraphics[width=.32\textwidth]{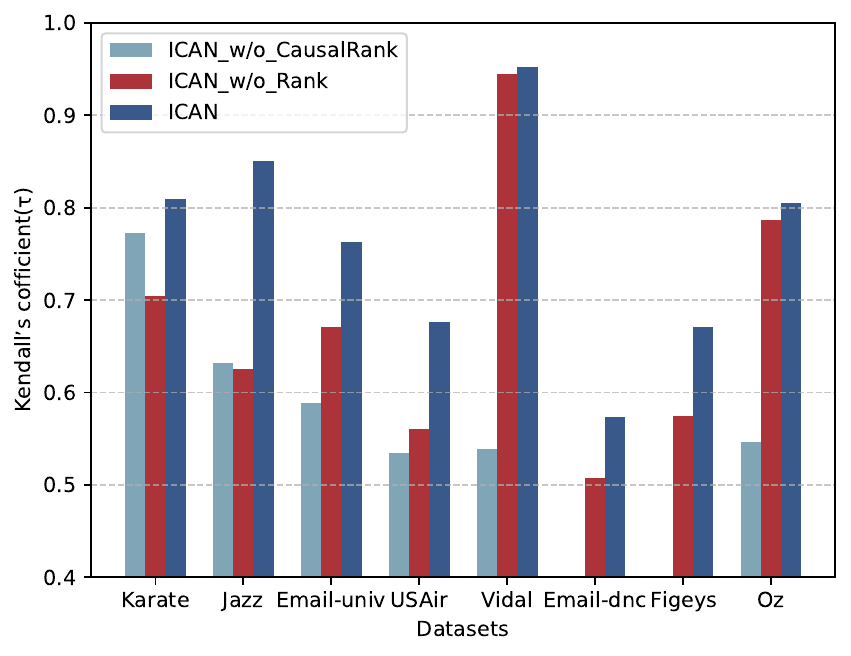}\label{wocausalrank} }
	\caption{\label{Ablation study results} Ablation study results.}
\end{figure*}

\begin{figure*}[htbp]
	\centering
	\subfloat[\change{Results with different $\lambda_1$.}]{\includegraphics[width=.32\textwidth]{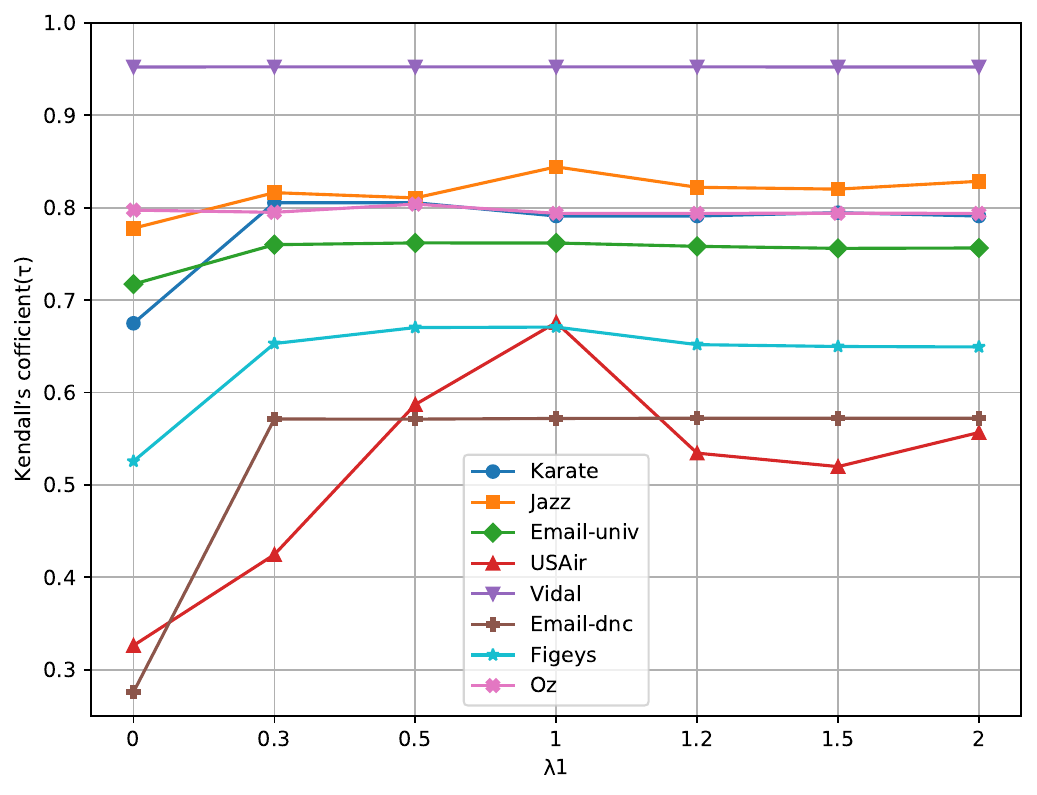}\label{sen_l1} }
	\subfloat[\change{Results with different $\lambda_2$.}]{\includegraphics[width=.32\textwidth]{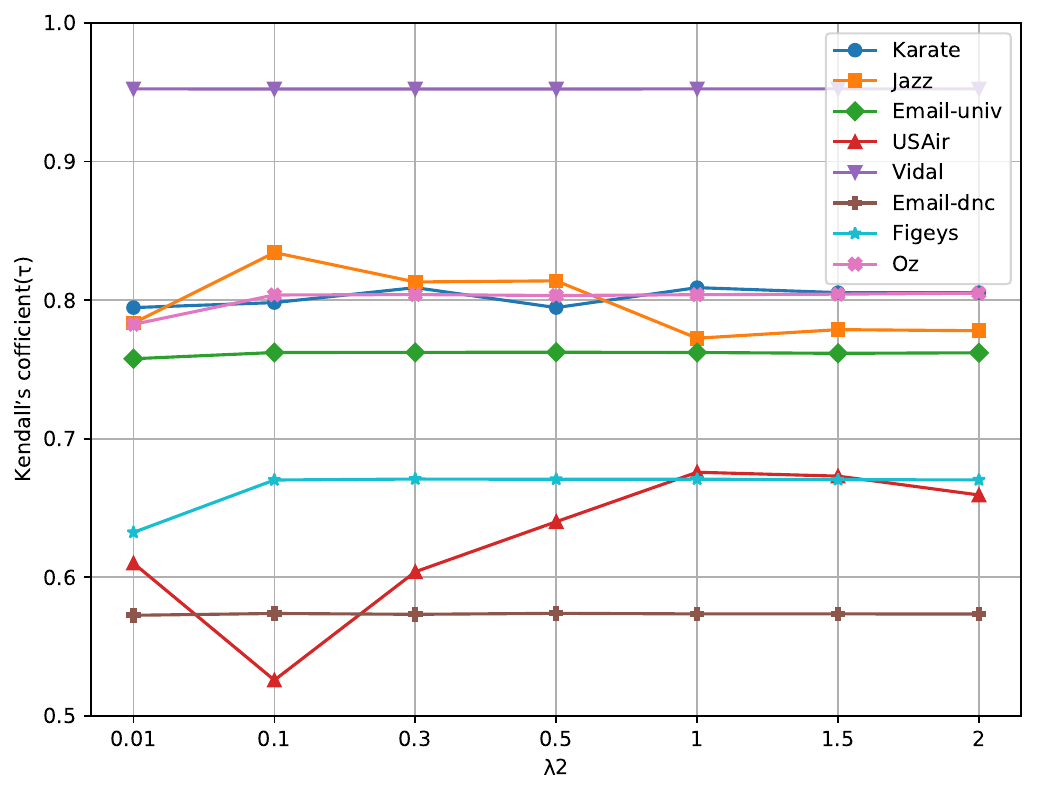} \label{sen_l2}}
	\subfloat[\change{Results with different $\lambda_3$.}]{\includegraphics[width=.32\textwidth]{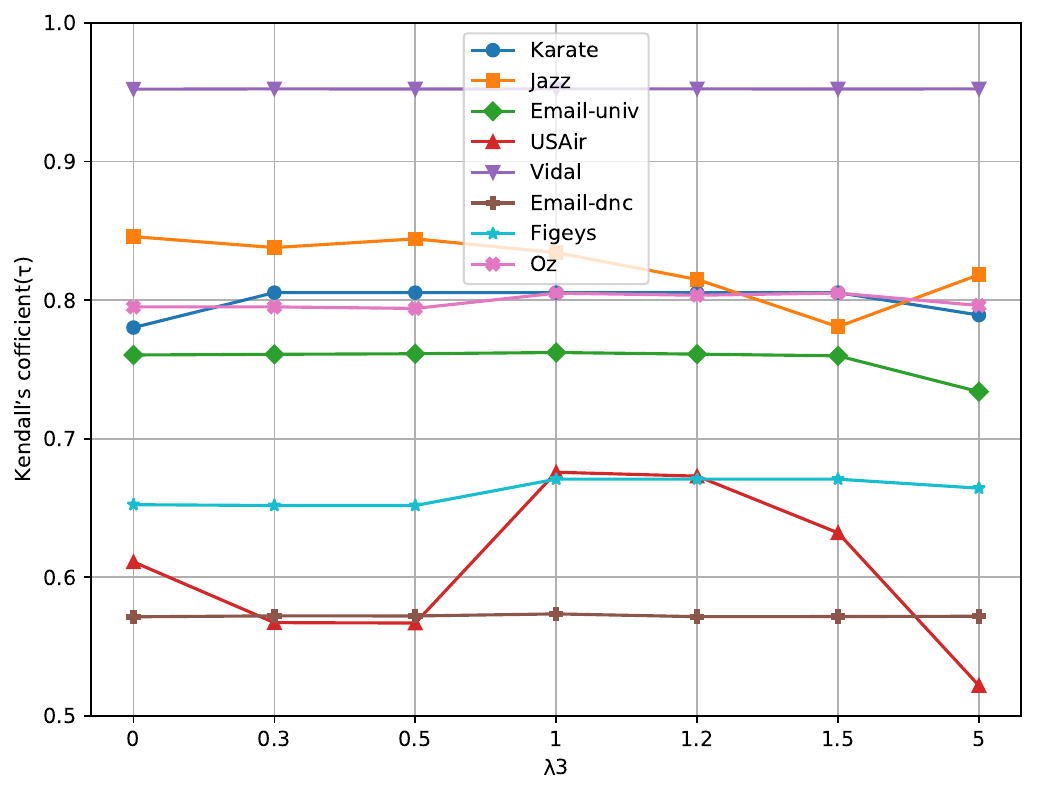} \label{sen_l3} }
	\caption{\label{sen_all}  Parameters sensitivity study on all the real-world networks for ICAN.}
\end{figure*}

\subsubsection{Ablation study for the influence-aware causal node embedding mechanism \label{RQ11}}  
As shown in the equations of $\mathcal{L}_1$ and $\mathcal{L}_2$, both are associated with $\bm{W}$, through which 
the node embeddings are learned to be causally related to the node influence scores. 
This indicates that the causal reconstruction loss and the causal ranking loss jointly contribute to the process of 
influence-aware causal representation learning. To emphasize the advantages of this influence-aware 
causal mechanism in ICAN, we conduct an ablation study by setting $\lambda_1 = 0$ and fixing $\bm{W}$ as an identity matrix. 
In this setup, the Markov blanket of the node influence score variable in the causal reconstruction loss is replaced by 
the low-dimensional node embeddings directly obtained from a traditional autoencoder. This variant can be regarded as a classical 
autoencoder applied to the ranking task and is denoted as ICAN\_w/o\_Cau. The comparison results between ICAN and ICAN\_w/o\_Cau 
across different networks are shown in Fig.~\ref{Ablation study results}(a). A noticeable performance decline of ICAN\_w/o\_Cau is 
observed in most node ranking tasks, highlighting the critical importance of the causal mechanism integrated into 
the autoencoder model.

\subsubsection{{Ablation study for the feature–task co-optimization mechanism}\label{RQ2}}
In this study, we integrate the causal ranking loss and causal reconstruction loss within a unified framework to enable causal feature extraction and ranking tasks. To evaluate the effectiveness of the proposed end-to-end 
 feature–task co-optimization mechanism, we conduct an ablation study by transforming the integrated model into a 
 two-stage implementation, in which the feature extraction and ranking processes are decoupled, thereby removing 
 the synergistic optimization effect. This variant is referred to as ICAN\_w/o\_Syner.
In the first stage, the causal reconstruction loss and a regularization term are employed as the loss function to optimize the causal node embedding module, yielding low-dimensional node representations that comprehensively encapsulate the topological properties of the network. During the second stage, the learned node embeddings are fed into the ranking prediction module, which is optimized using the causal ranking loss along with a regularization term to generate the predicted node influence score. These predictions are then compared against the node influence score derived from the SIR model, with Kendall’s $\tau$ coefficient used as the evaluation metric. The two-stage model is trained on the BA synthetic network and evaluated on \change{eight} real-world networks. The comparison results between ICAN and ICAN\_w/o\_Syner on different networks are presented in Fig.~\ref{Ablation study results}(b). Performance decline in ICAN\_w/o\_Syner is observed in most of the node ranking tasks. These results demonstrate that
ICAN facilitates mutual reinforcement between causal representation learning and ranking prediction, resulting in more accurate ranking outcomes.

\subsubsection{Ablation study for the causal ranking loss}\label{RQ3}
To validate the efficacy of the proposed the causal ranking loss, we conduct ablation experiments, replacing the CausalListMLE with the ListMLE loss for the node importance ranking, defined as ICAN\_w/o\_CausalRank, 
and additionally replacing the CausalListMLE with a MSE loss for the node importance score regression, 
denoted as ICAN\_w/o\_Rank. 
As illustrated in Fig.~\ref{Ablation study results}(c), the ICAN model consistently and significantly outperforms the ICAN\_w/o\_CausalRank and ICAN\_w/o\_Rank variants on all datasets. 
The superiority of the CausalListMLE loss stems from its inherent ability to capture ordinal relationships within ranked lists, which is more aligned with the nature of the node importance ranking objective. Overall, these findings conclusively demonstrate that integrating the causal ranking loss function is indispensable for achieving high-ranking accuracy, thereby affirming the design rationale behind our task-aware mechanism.

Remark: Our work follows a similar paradigm to the work of AGNN~\cite{xiong2024vital}
and STGCN~\cite{huang2025sparse}. Both frameworks achieve superior node importance ranking by integrating graph neural networks with autoencoder-based representation learning and transfer learning from synthetic to real-world networks. The novelty of our model compared to these two is that we introduce an influence-aware causal node embedding mechanism to ensure that the learned embeddings are causally related to node influence. We further propose a unified objective function that integrates a causal reconstruction loss and a causal ranking loss with a regularization term. This framework synergistically optimizes both feature representation and ranking, which is validated by ablation studies to enhance ranking performance and generalization capability.

\subsection{{Experiments with various training networks (RQ3)}} \label{RQ5}
To examine the dependency of the proposed model on training networks, we conduct experiments on three real-world networks—Karate, Jazz, and Email-univ—each trained under five distinct generative graph models (BA, ER, EH, QS, and RH). As shown in Tables~\ref{Karate test}, \ref{Jazz test} and \ref{Email-univ test}, ICAN consistently achieves the highest Kendall’s coefficient across all {generated training graphs}, demonstrating its {robust} transferability from synthetic to real-world networks.  Specifically, on the Karate network, ICAN attains an average Kendall’s coefficient of approximately 0.78, significantly surpassing all baseline methods. On the Jazz network, ICAN maintains robust and stable performance with average coefficients above 0.83, whereas competing methods exhibit substantial fluctuations across different generative networks. Similarly, on the Email-univ network, ICAN achieves the highest and most consistent results. These results collectively verify that ICAN exhibits superior flexibility and adaptability to variations in training graph structures compared with other existing approaches.

\begin{table}[htbp]
	\caption{Kendall’s coefficient of various methods on Karate network.}
	\label{Karate test}
	\centering
		\begin{tabular}{ccccccccc}
			\toprule
		Methods &BA & ER & EH  & QS & RH  \\
		\midrule
        GNN-Bet  &0.6530 &0.6137&0.6280&0.6280&0.5602\\
        GNN-Close&0.5352 &0.6887&0.6780&0.6244&0.6958\\
		RCNN &0.7528&0.7094&0.7130&0.7094&0.7058   \\
		CGNN &0.7058&0.6749&0.6749&0.6737&0.6737 \\
		AGNN &0.6505&0.7694&0.7586&0.6865&0.5828  \\
        STGCN &0.5640 &0.5568 &0.5604 &0.5928 &0.5568\\
        ICAN &\underline{0.8090}  &\underline{0.8092}&\underline{0.7910}&\underline{0.7622}&\underline{0.7442} \\
			\bottomrule
	\end{tabular}
\end{table}

\begin{table}[htbp]
	\caption{Kendall’s coefficient of various methods on Jazz network.}
	\label{Jazz test}
	\centering
		\begin{tabular}{ccccccccc}
			\toprule
			Methods &BA & ER & EH  & QS & RH  \\
			\midrule
            GNN-Bet  &0.6789 &0.6394&0.6711&0.6485&0.7178\\
            GNN-Close&0.6551 &0.2779&0.4140&0.5574&0.5806\\
			RCNN &0.8455&0.8014&0.8012&0.8156&0.8212   \\
			CGNN &0.7911&0.6214&0.6214&0.6219&0.6219 \\
			AGNN &0.7584&0.3610&0.7338&0.2400&0.2926  \\
            STGCN &0.7000 &0.6525 &0.6524&0.6673&0.6441\\
			ICAN  &\underline{0.8400}&\underline{0.8504}&\underline{0.8303}&\underline{0.8299}&\underline{0.8321} \\
			\bottomrule
	\end{tabular}
\end{table}

\begin{table}[htbp]
	\caption{Kendall’s coefficient of various methods on Email-univ network.}
	\label{Email-univ test}
	\centering
		\begin{tabular}{ccccccccc}
			\toprule
		Methods &BA & ER & EH  & QS & RH  \\
		\midrule
        GNN-Bet&0.3645 &0.3744&0.3646&0.4024&0.3614\\
        GNN-Close& 0.3629&0.4550&0.4662&0.5089&0.4616\\
		RCNN &0.7522 &0.8141&0.8149&\underline{0.7878}&\underline{0.7845}   \\
		CGNN &0.6488&0.6323&0.6325&0.6323&0.6326 \\
		AGNN &0.3100&0.3485&0.2995&0.2800&0.2889  \\
        STGCN &0.7327 &0.7141&0.7146&0.7038&0.7093\\
		ICAN &\underline{0.7625}  &\underline{0.8219}&\underline{0.8222} &0.7838&0.7737 \\
			\bottomrule
	\end{tabular}
\end{table}

From the results, we observe that the performance of ICAN exhibits slight variations when trained on different networks for a given target network. This behavior may stem from the structural similarities between the training and target networks. In future work, we plan to conduct a more systematic investigation of this phenomenon and establish a theoretical framework for selecting or generating suitable training networks for specific target networks.

\subsection{Parameter sensitivity analysis}

\change{To investigate the impact of the balanced parameters $\lambda_1$, $\lambda_2$ and $\lambda_3$, we conduct a parameter sensitivity analysis of the models trained on the BA network, evaluating their performance on eight real-world networks.}
Let $\lambda_2 = \lambda_3 =1$, 
the Kendall’s coefficients with different values of  $\lambda_1$
are shown in 
Fig.~\ref{sen_all}(a). 
Similarly, setting the values of the other two parameters to be the same, 
the results with \change{different} values of $\lambda_2$ and $\lambda_3$ are displayed in 
Fig.~\ref{sen_all}(b) and Fig.~\ref{sen_all}(c)
respectively. 
\change{The results demonstrate that while the model responds to parameter variations, it exhibits significant robustness and consistency across the majority of datasets.
Specifically, 
for $\lambda_1$, the performance remains stable and optimal within the range $(0.5, 1.2)$ for all networks.
Similarly, for $\lambda_2$ and $\lambda_3$, most networks (7 out of 8) achieve optimal or near-optimal performance when the parameters are set around $1$.  The Jazz network is the only exception, requiring lower values ($\lambda_2 \approx 0.1, \lambda_3 \approx 0.5$) to reach its peak, likely due to its distinct topological structure compared to the other networks.
In practical scenarios without ground truth labels, users need a reliable default setting. Based on the extensive overlap of optimal regions across the majority of datasets, we empirically recommend a unified setting of $(\lambda_1, \lambda_2, \lambda_3) = (1, 1, 1)$. This configuration effectively balances the model components and ensures robust performance for most real-world networks, eliminating the need for extensive parameter tuning in most cases.}

\change{To investigate the effect of different parameterizations for synthetic training networks,} 
\change{we generated BA networks with varying number of nodes (1,000, 2,000, and 3,000) and average degrees (2, 4, and 8). Their performance is evaluated on six real-world datasets, as shown in 
Fig.~\ref{3D}. 
The results indicate that the model yields the best performance 
with 1,000 nodes and an average degree of 4. Based on these findings, we recommend this parameter as the default training setting for practical applications.} 


\begin{figure*}[htbp]
	\centering
	\subfloat[\change{Karate}]{\includegraphics[width=.32\textwidth]{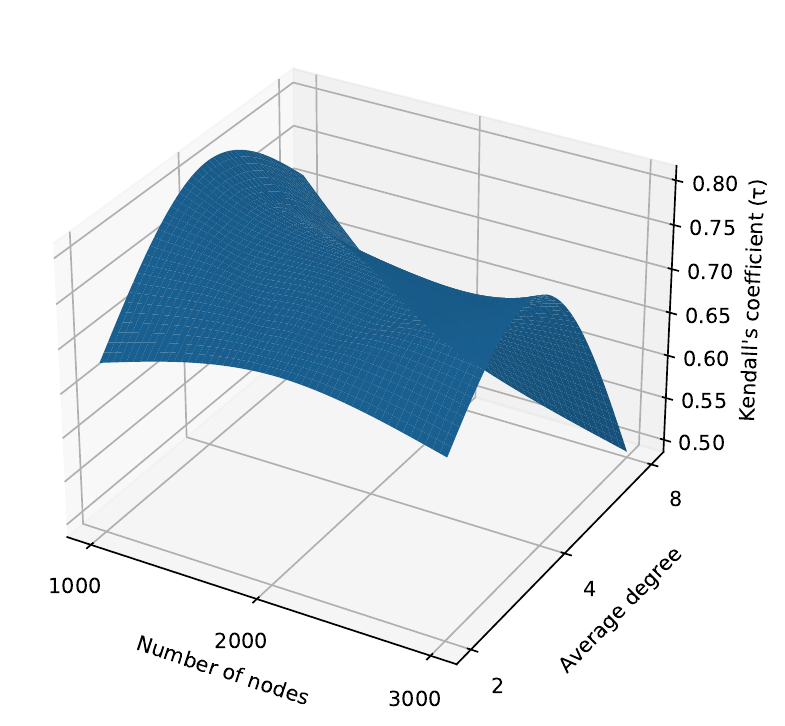}\label{Karate} }
	\subfloat[\change{Jazz}]{\includegraphics[width=.32\textwidth]{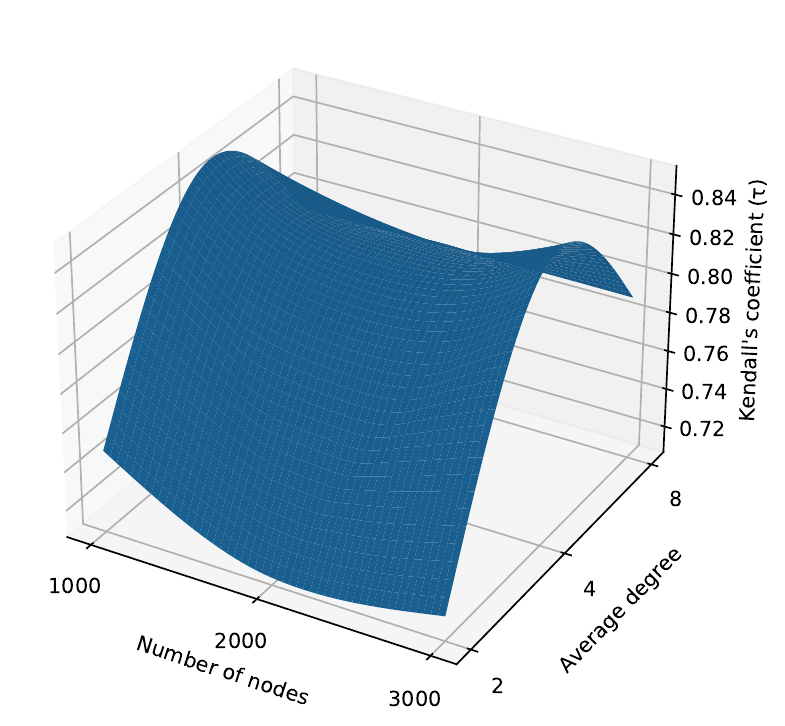} \label{Jazz}}
	\subfloat[\change{Email-univ}]{\includegraphics[width=.32\textwidth]{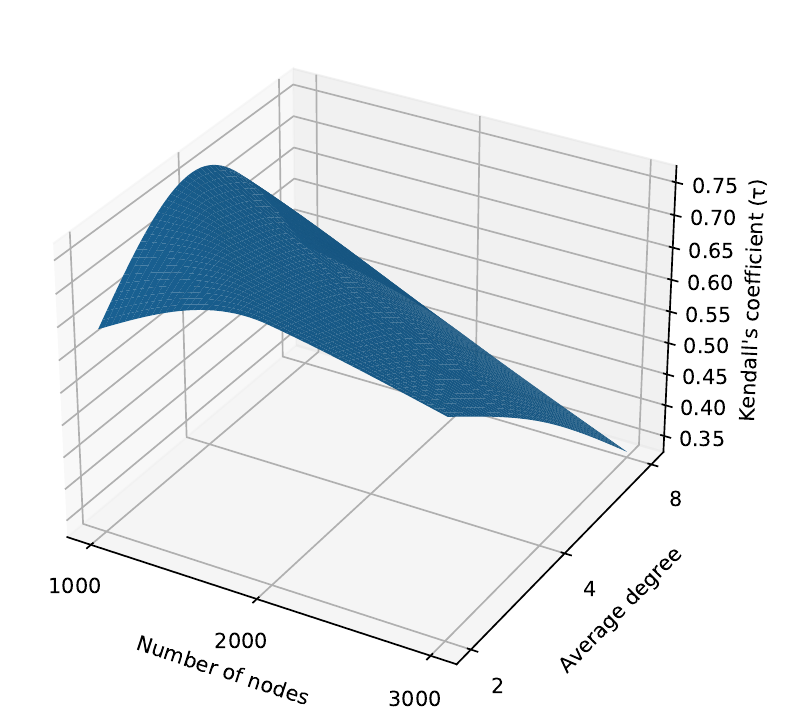} \label{Email-univ} }%
    \\
    \subfloat[\change{USAir}]{\includegraphics[width=.32\textwidth]{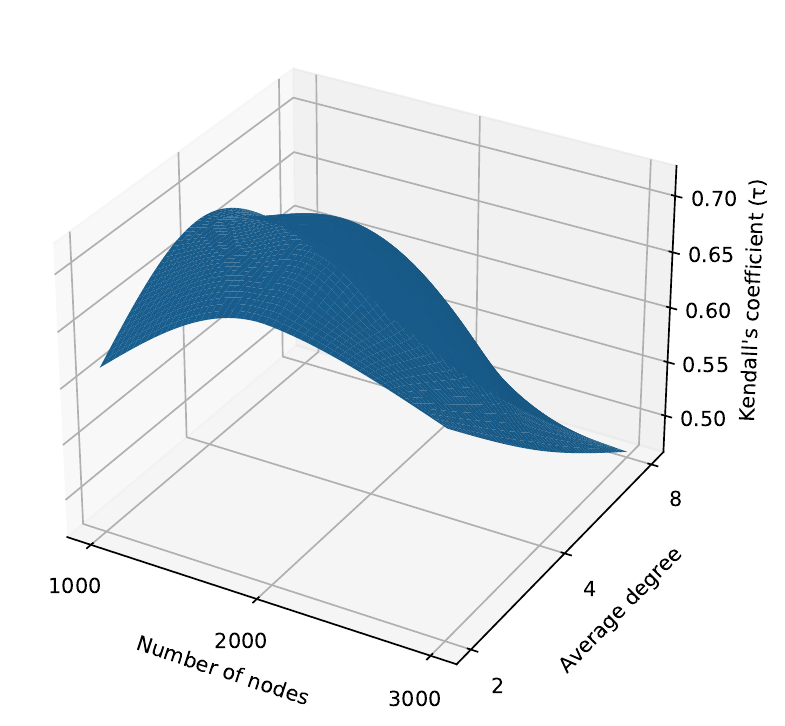}\label{USAir} }
    \subfloat[\change{Figeys}]{\includegraphics[width=.32\textwidth]{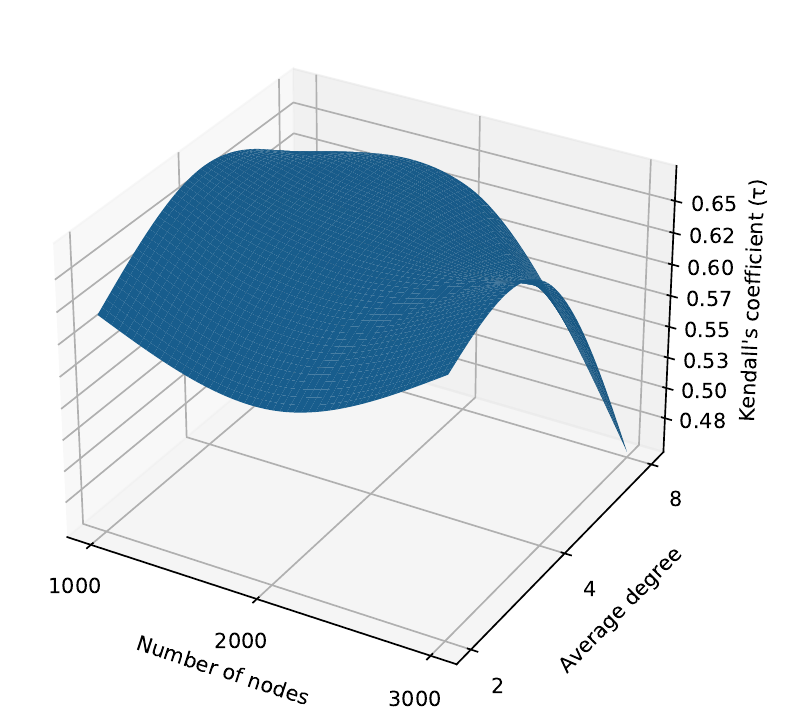}\label{Figeys} }
    \subfloat[\change{Oz}]{\includegraphics[width=.32\textwidth]{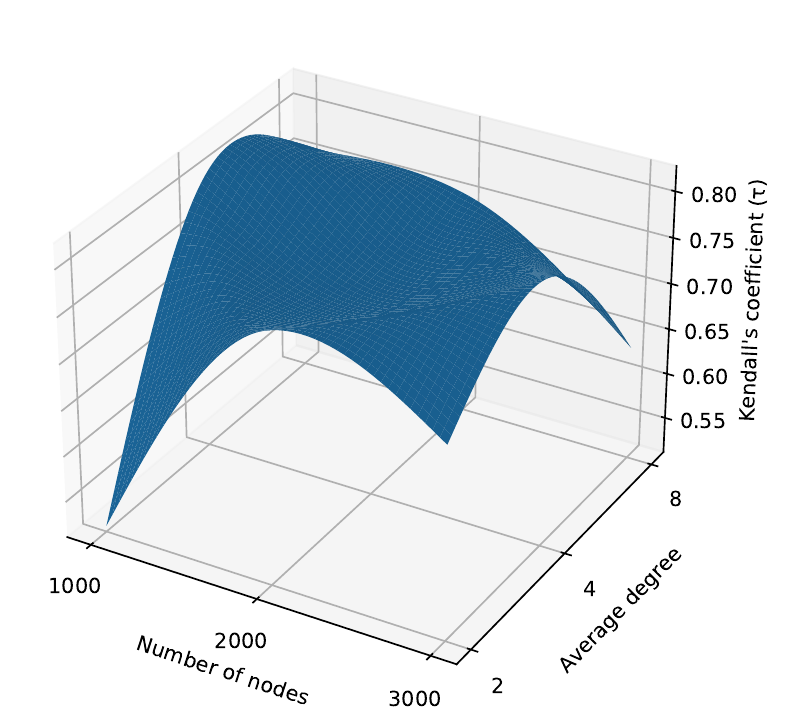}\label{Oz} }
	\caption{\label{3D}  \change{Performance comparison of different training network parameters (numbers of nodes and average degrees) on six
real-world networks.}}
\end{figure*}

\section{Discussion}
In this section, we discuss the implications of the proposed model from both network science and engineering perspectives. First, we examine the degree distribution of the most influential nodes identified by the model, providing insights into the structural characteristics that underlie node importance. We then explore the model’s behavior and applicability in \change{real-world networks with structural descriptors as node features and }social networks with node attributes, highlighting its potential in more complex, attribute-rich environments.
\subsection{Degree distribution of top 10\% influential nodes}

In this subsection, we analyze the degree distribution of nodes identified as important according to the predicted ranking. 
Figs.~\ref{degree distribution1}--\ref{degree distribution3}
compare the degree distribution of the three networks and the corresponding influence scores of the top 10\% nodes identified by ICAN, DC, and CGNN. Each subfigure combines a blue histogram representing the network’s degree distribution (left $y$-axis, relative frequency) with red scatter points indicating the influence scores of individual nodes (right $y$-axis, node influence score).

A clear methodological contrast emerges. As expected for the degree–centrality-based approach, the results for DC in
Figs.~\ref{degree distribution1}(b)--\ref{degree distribution3}(b) 
exhibit a strong positive correlation: nodes with the highest influence scores 
coincide with those of the largest degrees, concentrating red points at the high-degree end of the spectrum. 
In contrast, ICAN and CGNN reveal markedly different patterns. High-scoring nodes appear across a broader degree range, with a noticeable concentration in the moderate-degree region. This suggests that these methods attribute high influence to nodes that are not necessarily the network’s primary hubs.

This divergence highlights a key insight: ICAN captures influence patterns that extend beyond simple degree centrality. In network dismantling scenarios, where intervention costs vary across 
nodes (for example, high-degree nodes may entail greater operational or socio-economic expenses), ICAN’s emphasis on structurally balanced, propagation-relevant nodes results in a prioritized degree distribution conducive to 
lower-cost, targeted interventions. These findings point to practical strategies for resource-constrained mitigation, including the identification of critical components for the efficient dismantling of non-cooperative networks.

\begin{figure*}[htbp]
	\centering
	\subfloat[ICAN.]{\includegraphics[width=.32\textwidth]{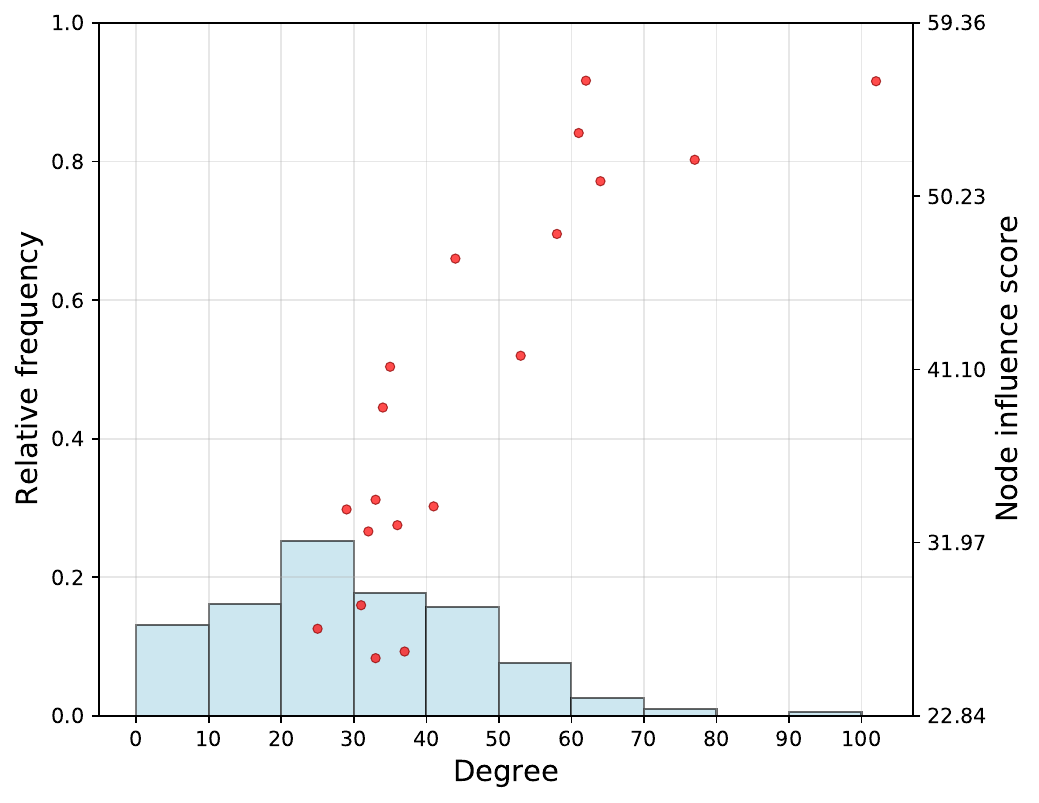}\label{ICAN jazz} }
	\subfloat[DC.]{\includegraphics[width=.32\textwidth]{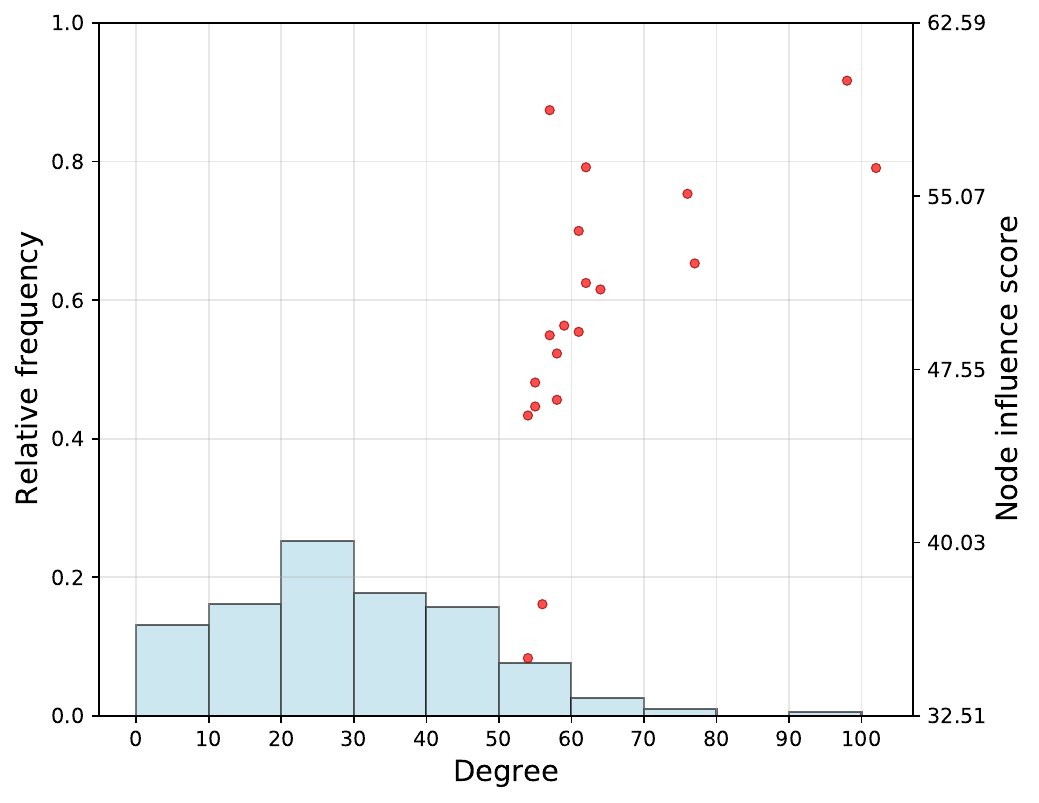} \label{DC jazz}}
	\subfloat[CGNN.]{\includegraphics[width=.32\textwidth]{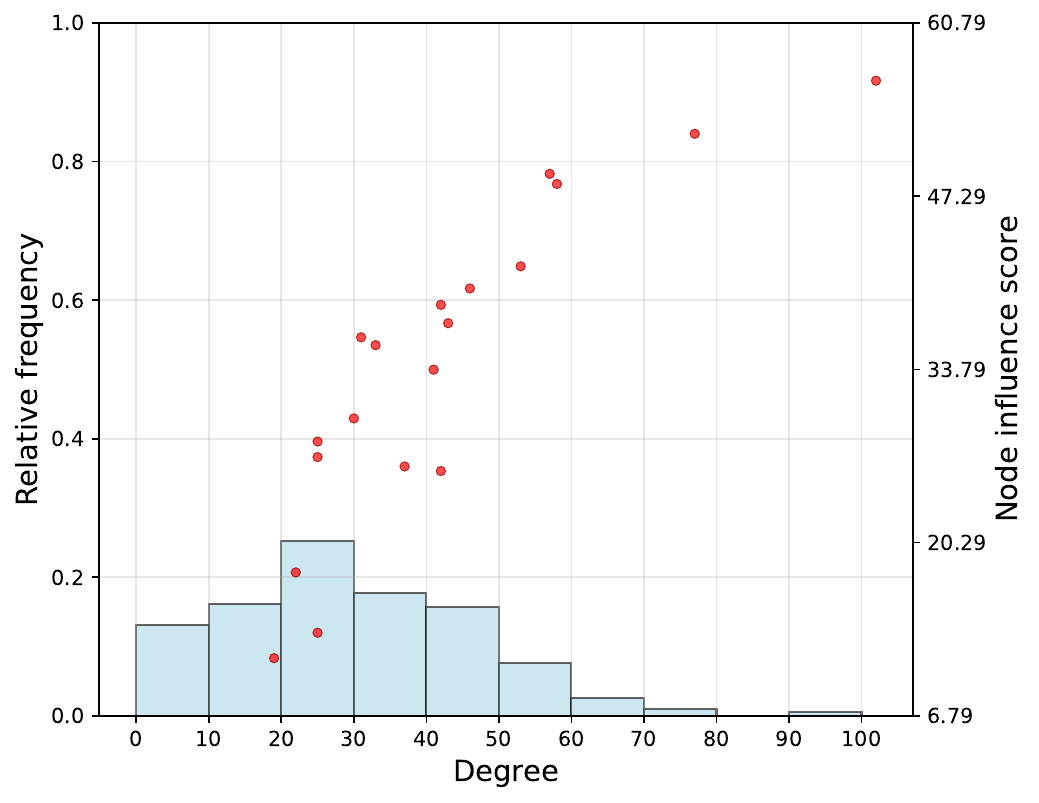}\label{CGNN jazz} }
	\caption{\label{degree distribution1} Degree distribution of the detected top $10\%$ important nodes in Jazz network.}
\end{figure*}

\begin{figure*}[htbp]
	\centering
	\subfloat[ICAN.]{\includegraphics[width=.32\textwidth]{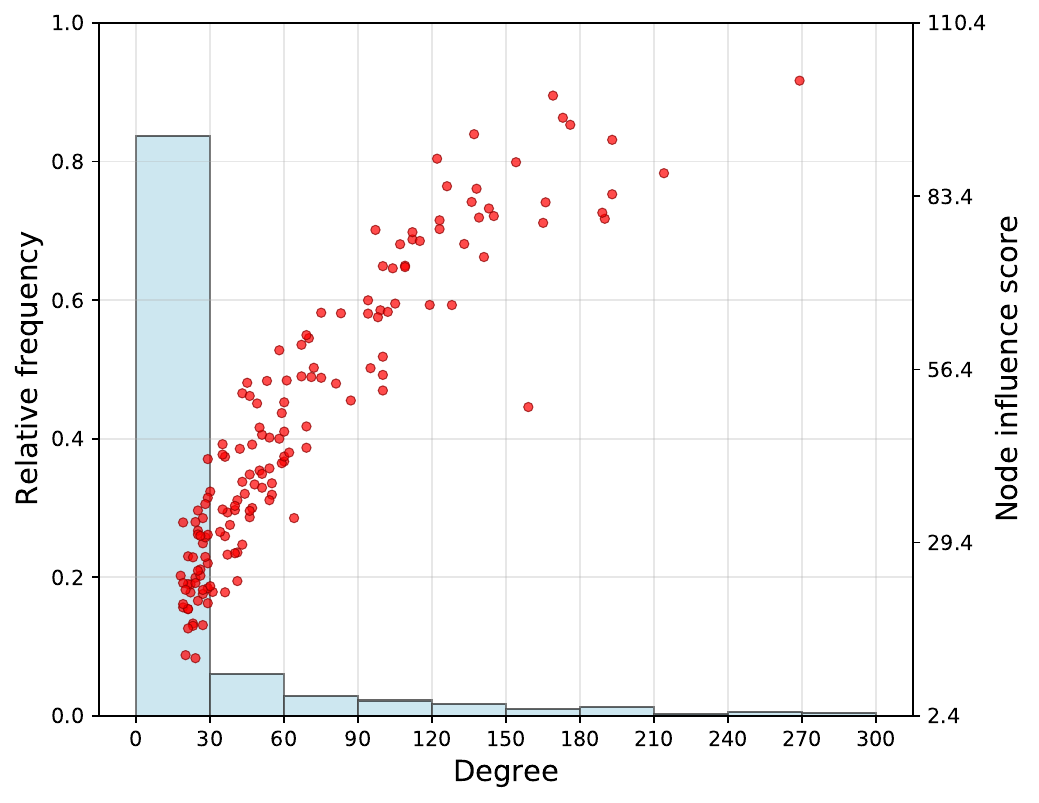}\label{ICAN USAir} }
	\subfloat[DC.]{\includegraphics[width=.32\textwidth]{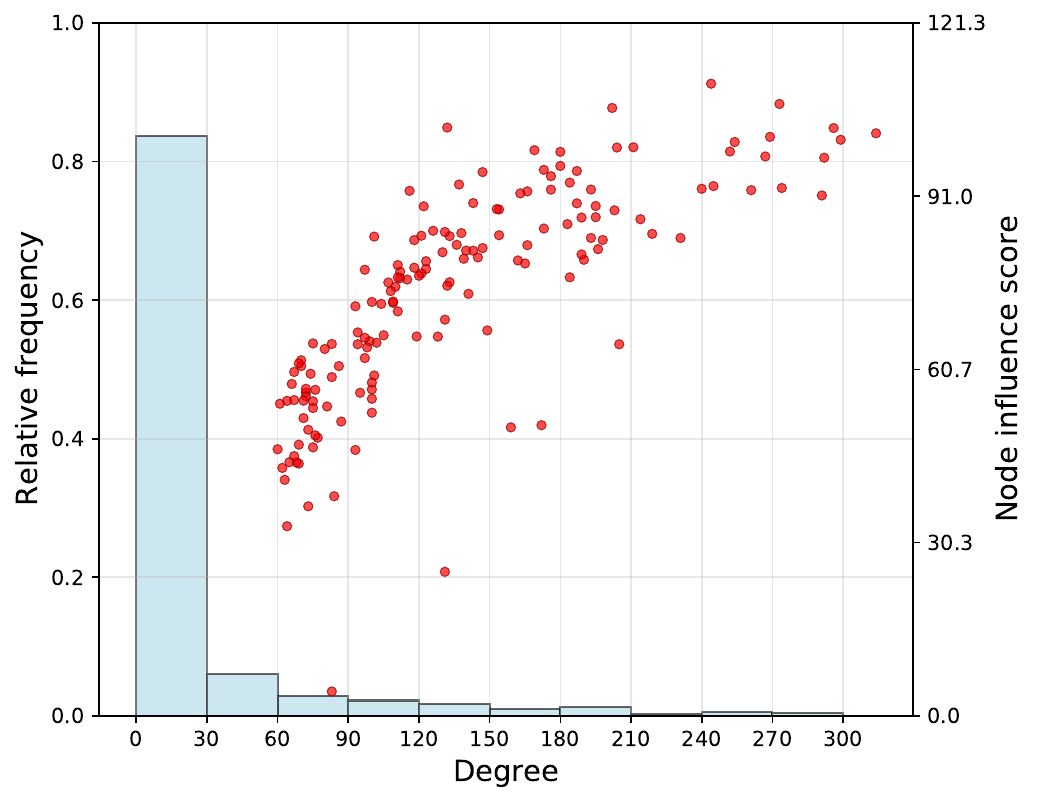} \label{DC USAir}}
	\subfloat[CGNN.]{\includegraphics[width=.32\textwidth]{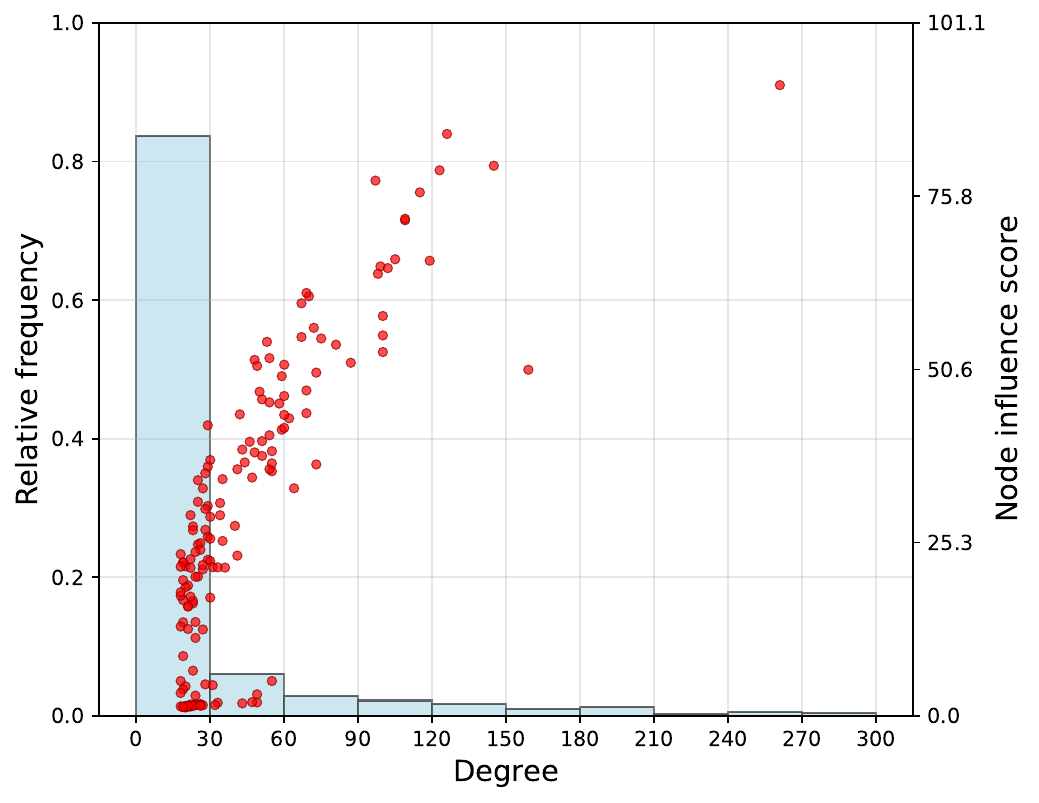}\label{CGNN USAir} }
	\caption{\label{degree distribution2} Degree distribution of the detected top $10\%$ important nodes in USAir network.}
\end{figure*}

\begin{figure*}[htbp]
	\centering	\subfloat[ICAN.]{\includegraphics[width=.32\textwidth]{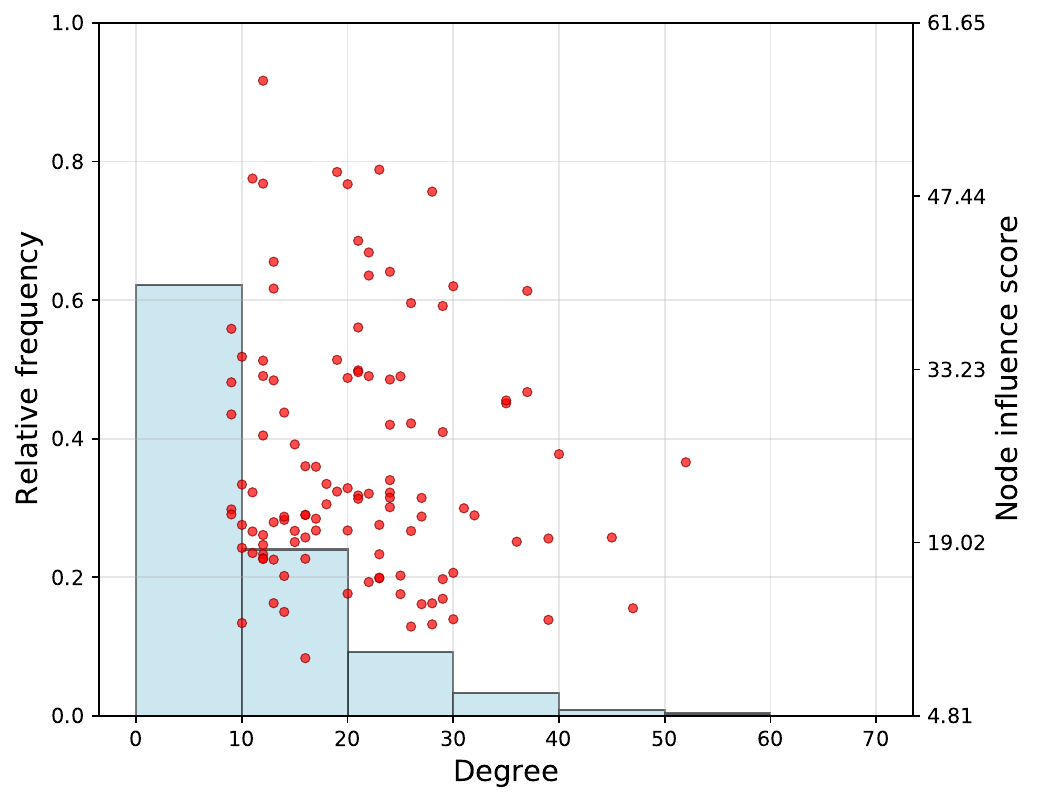}\label{ICAN email-univ} }
	\subfloat[DC.]{\includegraphics[width=.32\textwidth]{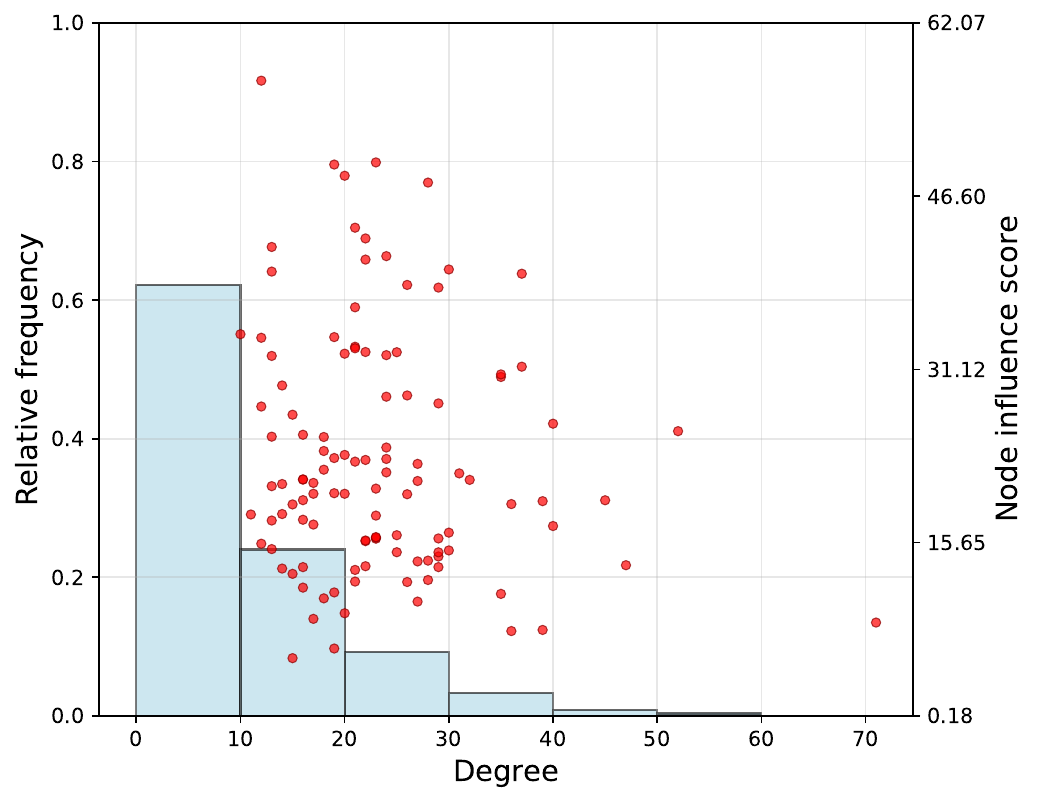} \label{DC email-univ}}
	\subfloat[CGNN.]{\includegraphics[width=.32\textwidth]{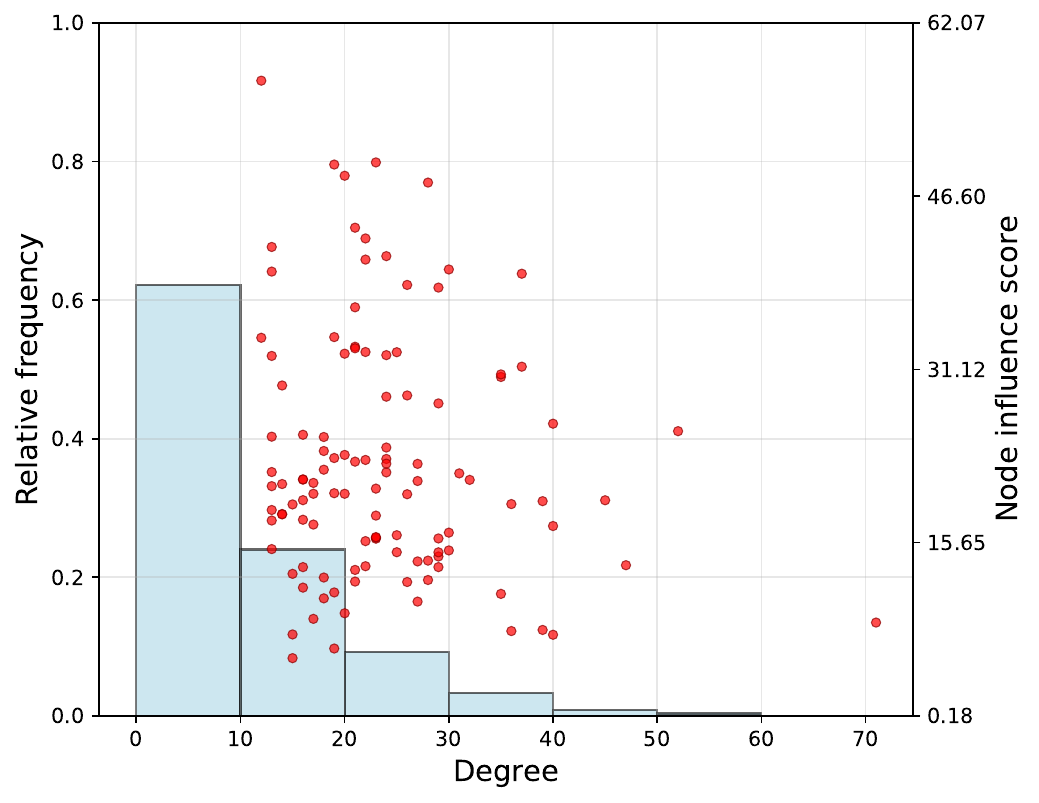}\label{CGNN email-univ} }
	\caption{\label{degree distribution3} Degree distribution of the detected top $10\%$ important nodes in Email-univ network.}
\end{figure*}


\subsection{\change{Case study}}
\change{Although ICAN utilizes node2vec embeddings as the feature matrix in our primary experiments, it is capable of flexibly accommodating diverse feature inputs. 
Specifically, for non-attributed networks, it can 
leverage explicit topological descriptors, while for attributed networks, it can directly utilize the raw attribute matrix. To substantiate this flexibility, we conduct two 
supplementary experiments in this subsection.}
\subsubsection{\change{Experiments with structural descriptors as node features}}
\change{In this experiment, we select six classical node centrality measures
 as feature vectors to represent the characteristics
and structural properties of nodes, including degree centrality, betweenness centrality, closeness centrality, clustering coefficient, eigenvector centrality, and load centrality. To reduce the risk of overfitting, feature values 
for each node are normalized using min-max scaling
so that they fall within the interval $(-1,1)$. The resulting normalized centrality vectors form the feature matrix fed into ICAN. We train ICAN on BA synthetic network and evaluate its generalization performance on eight real-world networks. As shown in Table~\ref{ICAN_C}, the variant using structural descriptors as node features (ICAN-S) achieves slightly lower average ranking performance than the node2vec-based version (ICAN)}. 
\change{However, as reported in Table~\ref{BA training}, ICAN-S consistently achieves superior average ranking performance over all other baselines across datasets.
These results indicate that ICAN can flexibly adapt to different types of input features while maintaining strong generalization and robust ranking performance.}

\begin{table*}[htbp]
    \caption{\change{kendall's $\tau$ of ICAN and ICAN-S on eight real-world networks.}}
    \label{ICAN_C}
    \centering
    \resizebox{\linewidth}{!}{ 
    \begin{tabular}{cccccccccc}
    \toprule
       Methods  & Karate & Jazz & Email-univ & USAir &Vidal &Email-dnc &Figeys &Oz & Average \\
        \midrule
       ICAN &\underline{0.8090} &\underline{0.8504} & \underline{0.7625} & \underline{0.6758} &\underline{0.9524} & 0.5740 &\underline{0.6708} &\underline{0.8051} & \underline{0.7625} \\
       ICAN-S &0.8033 & 0.8259 & 0.6598 & 0.6338 & 0.9524 &\underline{0.5939} & 0.6391 & 0.7819 & 0.7363\\
       \bottomrule
    \end{tabular}}
\end{table*}

\subsubsection{\change{Social network with attribute-enriched node features}}
In this case study, we show that ICAN can be applied to a social network with attribute information. The feature matrix can be constructed from node attributes. Cora~\cite{cabanes2012cora} is a citation network where nodes represent scientific publications and edges represent directed citation links between them. Each node feature in the Cora dataset is a 1,433-dimensional binary bag-of-words vector representing the presence or absence of specific terms in the paper’s content.
Our model is trained on BA synthetic networks (using \change{node2vec} for feature generation)  
and evaluated on the attributed Cora real-world network. The experimental results, as shown in Fig.~\ref{with attribute cora}, demonstrate that our method achieves superior ranking performance compared to all baseline approaches.  
The results demonstrate that ICAN can be applied to social networks with attribute information, indicating its potential for broader applications in social network analysis. However, in such attribute-rich networks, 
inconsistencies may arise between topological structures and node attributes~\cite{ma2025node}. 
Therefore, future work will focus on developing more effective embedding strategies that can integrate both sources of information in a 
coherent manner.


\begin{figure}[htbp]
	\centerline{
		\includegraphics[scale=0.3]{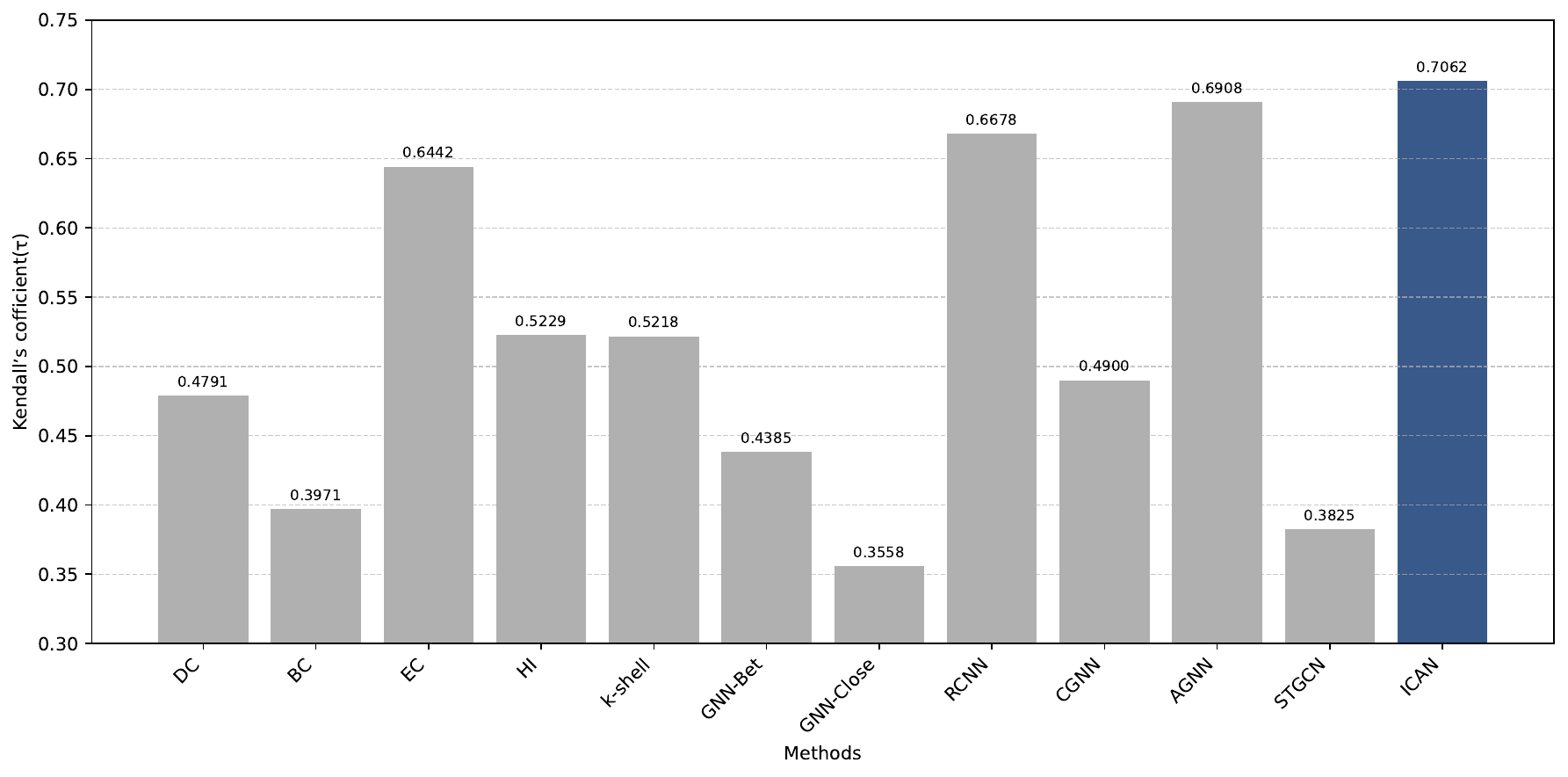}}
	\caption{Kendall's $\tau$  of different methods on Cora dataset. }
	\label{with attribute cora}
\end{figure}

\section{Conclusions}\label{sec5}
Quantifying node importance is a challenging task, particularly in non-cooperative scenarios where the target network structure is \change{inaccessible} in advance due to privacy or security constraints. 
This paper proposes a node importance ranking method for complex networks based on an influence-aware causal autoencoder model. The proposed method effectively captures the network-invariant causal relationships between 
node embeddings and influence scores. This design allows ICAN to learn robust, low-dimensional node representations solely from 
synthetic networks, which can then be applied to node-ranking tasks on various real-world networks. 
We conduct comprehensive experiments by training ICAN on different types of synthetic networks and evaluating it on 
diverse benchmark networks, comparing its performance with eleven representative node-ranking methods. 
The results confirm that ICAN achieves state-of-the-art performance, excelling in both ranking accuracy and 
cross-network generalization.
The results of the ablation studies confirm that both the influence-aware causal node embedding mechanism and the feature–task co-optimization strategy—based on the jointly defined causal ranking loss and causal 
reconstruction loss—play crucial roles in ensuring that the learned representations generalize effectively across 
diverse target graphs and are better suited for downstream ranking tasks, thereby substantially enhancing the model’s overall performance. 
Future work will focus on establishing a theoretical framework for selecting or generating suitable training networks for a given network engineering task. In addition, subsequent research will undertake a 
deeper investigation of ICAN in practical network engineering contexts, such as developing cost-aware strategies for network dismantling and designing embedding methods that effectively integrate topological and semantic 
information in social networks.

\bibliographystyle{IEEEtran}
\bibliography{reference}

\end{document}